\newif\iflongversion
\crefname{observation}{observation}{observations}
\Crefname{observation}{Observation}{Observations}
\title{The Complexity of Blocking All Solutions}
\author{Christoph Grüne}{Department of Computer Science, RWTH Aachen University, Germany}{gruene@algo.rwth-aachen.de}{https://orcid.org/0000-0002-7789-8870}{Funded by the German Research Foundation (DFG) – GRK 2236/2.}
\author{Lasse Wulf}{Section of Algorithms, Logic and Graphs, Technical University of Denmark, Kongens Lyngby, Denmark}{lawu@dtu.dk}{https://orcid.org/0000-0001-7139-4092}{Funded by the Carlsberg Foundation CF21-0302 ``Graph Algorithms with Geometric Applications''.}
\authorrunning{C. Grüne and L. Wulf} %TODO mandatory. First: Use abbreviated first/middle names. Second (only in severe cases): Use first author plus 'et al.'
\keywords{
Computational Complexity,
Robust Optimization,
Most Vital Elements,
Most Vital Nodes,
Most Vital Vertex,
Most Vital Edges,
Blocker Problems,
Vertex Blocker,
Node Blocker,
Edge Blocker,
Interdiction Problems,
Polynomial Hierarchy,
Sigma-2} %TODO mandatory; please add comma-separated list of keywords
\definecolor{darkgreen}{RGB}{0,128,0}
\definecolor{darkred}{RGB}{128,0,0}
\newcommand{\R}{\mathbb{R}}
\newcommand{\N}{\mathbb{N}}
\newcommand{\Z}{\mathbb{Z}}
\newcommand{\I}{\mathcal{I}}
\newcommand{\U}{\mathcal{U}}
\newcommand{\F}{\mathcal{F}}
\newcommand{\sol}{\mathcal{S}}
\newcommand{\powerset}[1]{2^{#1}}
\newcommand{\set}[1]{\{ #1 \}}
\newcommand{\fromto}[2]{\set{#1, \ldots, #2}}
\DeclareMathOperator{\poly}{poly}
\newcommand{\leqSSP}{\leq_\text{SSP}}
\newcommand{\bin}{\set{0,1}}
\begin{document}

\maketitle
\begin{abstract}
    % !TEX root = ./main.tex
We consider the general problem of blocking all solutions of some given combinatorial problem with only few elements. 
For example, the problem of destroying all Hamiltonian cycles of a given graph by forbidding only few edges; or the problem of destroying all maximum cliques of a given graph by forbidding only few vertices.
Problems of this kind are so fundamental that they have been studied under many different names in many different disjoint research communities already since the 90s.
Depending on the context, they have been called the interdiction, most vital vertex, most vital edge, blocker, or vertex deletion problem.

Despite their apparent popularity, surprisingly little is known about the computational complexity of interdiction problems in the case where the original problem is already NP-complete.
In this paper, we fill that gap of knowledge by showing that a large amount of interdiction problems are even harder than NP-hard. 
Namely, they are complete for the second stage of Stockmeyer's polynomial hierarchy, the complexity class $\Sigma^p_2$.
Such complexity insights are important because they imply that all these problems can not be modelled by a compact integer program (unless the unlikely conjecture NP $= \Sigma_2^p$ holds).
Concretely, we prove $\Sigma^p_2$-completeness of the following interdiction problems:
    satisfiability,
    3satisfiability,
    dominating set,
    set cover,
    hitting set,
    feedback vertex set,
    feedback arc set,
    uncapacitated facility location,
    $p$-center,
    $p$-median,
    independent set,
    clique,
    subset sum,
    knapsack,
    Hamiltonian path/cycle (directed/undirected),
    TSP,
    $k$ directed vertex disjoint path ($k \geq 2$),
    Steiner tree.
We show that all of these problems share an abstract property which implies that their interdiction counterpart is $\Sigma_2^p$-complete.
Thus, all of these problems are $\Sigma_2^p$-complete \enquote{for the same reason}.
Our result extends a recent framework by Grüne and Wulf.
    
\end{abstract}

\section{Introduction}

This paper is concerned with the \emph{minimum cardinality interdiction problem}, by which we understand the following task: Given some base problem (the so-called \emph{nominal problem}) we wish to find a small subset of elements such that this subset has a non-empty intersection with every optimal solution of the base problem.
The concept of interdiction is so natural that is has re-appeared under many different names in different research communities. 
Depending on the context, the interdiction problem (or slight variants of it) has been called the \emph{most vital node/most vital edge} problem, the \emph{blocker} problem, and \emph{node deletion/edge deletion} problem. (More details on the sometimes subtle differences between these variants is provided further below.)
As an example for the type of problems that this paper is concerned with, consider the following two problems:
\begin{quote}
    \textbf{Problem}: {\sc Min Cardinality Clique Interdiction}

    \textbf{Input}: Graph $G = (V, E)$

    \textbf{Task}: Find a minimum-size subset $V' \subseteq V$ such that every maximum clique shares at least one vertex with $V'$. 
\end{quote}

\begin{quote}
    \textbf{Problem}: {\sc Min Cardinality Hamiltonian Cycle Interdiction}

    \textbf{Input}: Graph $G = (V, E)$

    \textbf{Task}: Find a minimum-size subset $E' \subseteq E$ such that every Hamiltonian cycle shares at least one edge with $E'$.
\end{quote}

In particular, if in the above examples the set $V'$ (respectively the set $E'$) is deleted from the graph, 
the maximum clique size decreases (respectively the graph becomes Hamiltonian-cycle-free). 
Hence the interdiction problem can be interpreted as the minimal effort required to destroy all optimal solutions.
Clearly, analogous problems can be defined and analyzed for a wealth of different nominal problems. Indeed, this has been done extensively by past researchers. 
The following is a non-exhaustive list:
Interdiction-like problems have been considered already since the 90's for a large amount of problems, among others for
shortest path \cite{bar1998complexity,DBLP:journals/mst/KhachiyanBBEGRZ08,malik1989k},
matching \cite{DBLP:journals/dam/Zenklusen10a},
minimum spanning tree \cite{DBLP:journals/ipl/LinC93},
or maximum flow \cite{WOOD19931}.
Note that in all these cases the nominal problem can be solved in polynomial time. Interdiction for nominal problems that are NP-complete has also been extensively considered, for example for
vertex covers \cite{DBLP:conf/iwoca/BazganTT10, DBLP:journals/dam/BazganTT11},
independent sets \cite{DBLP:journals/gc/BazganBPR15,DBLP:conf/iwoca/BazganTT10, DBLP:journals/dam/BazganTT11,DBLP:journals/dam/HoangLW23, DBLP:conf/tamc/PaulusmaPR17},
colorings \cite{DBLP:journals/gc/BazganBPR15, DBLP:conf/iscopt/PaulusmaPR16, DBLP:conf/tamc/PaulusmaPR17},
cliques \cite{DBLP:journals/eor/FuriniLMS19, DBLP:journals/anor/Pajouh20, DBLP:journals/networks/PajouhBP14,DBLP:conf/iscopt/PaulusmaPR16},
knapsack \cite{weninger2024fast,DBLP:conf/ipco/CapraraCLW13},
dominating sets \cite{DBLP:journals/eor/PajouhWBP15},
facility location \cite{DBLP:journals/tcs/FrohlichR21},
1- and $p$-center \cite{DBLP:conf/cocoa/BazganTV10, DBLP:journals/jco/BazganTV13}, and
1- and $p$-median \cite{DBLP:conf/cocoa/BazganTV10, DBLP:journals/jco/BazganTV13}.
A general survey is provided by Smith, Prince and Geunes \cite{smith2013modern}.

This large interest is due to the fact that interdiction problems are well-motivated from many different directions. 
In the area of robust optimization, interdiction is studied because it concerns robust network design, defense against (terrorist) attacks, and sensitivity analysis \cite{DBLP:journals/corr/abs-2406-01756}. 
In particular, we want to find the most vital nodes/edges of a given network in order to identify its most vulnerable points, and understand where small changes have the largest impact. 
Interdiction in these contexts is often interpreted as a min-max optimization problem, or alternatively as a game between a network interdictor (attacker) and a network owner (defender) with competing goals.
In the area of bilevel optimization, interdiction-like problems arise naturally from the dynamic between two independent hierarchical agents \cite{DBLP:conf/ipco/CapraraCLW13}.
In the area of pure graph theory, interdiction problems are usually called vertex and edge blocker problems. 
They relate to the important concepts of maximum induced subgraphs, critical vertices and edges, cores, and transversals (with respect to some fixed property) \cite{DBLP:conf/tamc/PaulusmaPR17}.
In the area of (parameterized) complexity, interdiction-like problems are usually called vertex deletion problems.
They arise from the desire to delete a constant number of vertices until the resulting graph has some desirable property, for example so that it can be handled by an efficient algorithm.
For instance, Lewis and Yannakakis showed that the vertex deletion problem for hereditary graph properties is NP-complete \cite{DBLP:journals/jcss/LewisY80} and Bannach, Chudigiewitsch and Tantau analyzed the parameterized complexity for properties formulatable by first order formulas \cite{DBLP:conf/mfcs/BannachCT24}.

\textbf{The natural complexity of minimum cardinality interdiction.}
In this paper, we are mainly concerned with interdiction problems where the nominal problem is already NP-complete. From a complexity-theoretic point of view, such interdiction problems are often times even harder than NP-complete, namely they are complete for the second stage in the so-called \emph{polynomial hierarchy} \cite{DBLP:journals/tcs/Stockmeyer76}. A problem complete for the second stage of the hierarchy is called $\Sigma^p_2$-complete. 
The theoretical study of $\Sigma^p_2$-complete problems is important:
If a problem is found to be $\Sigma^p_2$-complete, it means that, under some basic complexity-theoretic assumptions\footnote{More specifically, we assume here that $\text{NP} \neq \Sigma^p_2$, i.e.\ we assume the polynpmial hierarchy does not collapse to the first level. Similar to the famous $\text{P} \neq \text{NP}$ conjecture, this is believed to be unlikely by experts.
However, the true status of the conjecture is not known
(see e.g.\ \cite{DBLP:journals/4or/Woeginger21}).}, it is not possible to find a mixed-integer programming formulation of the problem of polynomial size \cite{DBLP:journals/4or/Woeginger21} (also called a \emph{compact} model).
This means that no matter how cleverly a decision maker tries to design their mixed integer programming model, it must inherently have a huge number of constraints and/or variables, and may be much harder to solve than even NP-complete problems. 
Furthermore, for the type of interdiction problems discussed here, where the nominal problem is NP-complete, 
under the same assumption $\text{NP} \neq \Sigma^p_2$, one can show that (the decision variant of) the interdiction problem is often times actually not contained in the complexity class NP, only in the class $\Sigma^p_2$.
Hence the class $\Sigma^p_2$ is the natural class for this type of problem.

Even though this fact makes the study of $\Sigma^p_2$-complete problems compelling, and even though interdiction-like problems have received a large amount of attention in recent years, 
surprisingly few $\Sigma^p_2$-completeness results relevant to the area of interdiction were known until recently. 
While the usual approach to prove $\Sigma^p_2$-completeness (or NP-completeness) is to formulate a new proof for each single problem,
a recent paper by Grüne \& Wulf \cite{gruene2024completeness}, extending earlier ideas by Johannes \cite{johannes2011new} breaks with this approach. 
Instead, it is shown that there exists a large 'base list' of problems (called \emph{SSP-NP-complete} problems in \cite{gruene2024completeness}), including many classic problems like satisfiability, vertex cover, clique, knapsack, subset sum, Hamiltonian cycle, etc.
Grüne and Wulf show that for each problem from the base list, some corresponding min-max version is $\Sigma^p_2$-complete, and some corresponding min-max-min version is $\Sigma^p_3$-complete. 
This approach has three main advantages: 
1.) It uncovers a large number of previously unknown $\Sigma^p_2$-complete problems. 
2.) It reveals the theoretically interesting insight, that for all these problems the $\Sigma^p_2$-completeness follows from essentially the same argument. 
3.) It can simplify future proofs, since heuristically it seems to be true that for a new problem it is often easier to show that the nominal problem belongs to the list of SSP-NP-complete problems, than to find a $\Sigma^p_2$-completeness proof from scratch.

\textbf{Our results.}
In this paper, we extend the framework of Grüne \& Wulf \cite{gruene2024completeness} to include the case of minimum cardinality interdiction problems. 
We remark that the original framework of Grüne \& Wulf already shows such a result in the case where the action of interdicting an element is associated with so-called interdiction costs, which may be different for each element. 
Hence our work can be understood as an extension to the unit-cost case, which is arguably the most natural variant of interdiction.
Concretely, in this paper we consider minimum cardinality interdiction simultaneously for all of the following nominal problems:
\begin{quote}
    satisfiability,
    3satisfiability,
    dominating set,
    set cover,
    hitting set,
    feedback vertex set,
    feedback arc set,
    uncapacitated facility location,
    $p$-center,
    $p$-median,
    independent set,
    clique,
    subset sum,
    knapsack,
    Hamiltonian path/cycle (directed/undirected),
    TSP,
    $k$ directed vertex disjoint path ($k \geq 2$),
    Steiner tree.
\end{quote}
We show that for all these problems, the minimum cardinality interdiction problem is $\Sigma^p_2$-complete.

More abstractly, we introduce a meta-theorem from which our concrete results follows.
This means we introduce a set of sufficient conditions for some nominal problem, 
which imply that the minimum cardinality interdiction problem becomes $\Sigma^p_2$-complete.
It turns out that compared to the original framework of Grüne and Wulf, additional assumptions are necessary in the unit-cost case. 

We remark that $\Sigma^p_2$-completeness was already known in the case of clique/independent set, and knapsack \cite{DBLP:conf/ipco/CapraraCLW13,DBLP:journals/amai/Rutenburg93,DBLP:journals/corr/abs-2406-01756}. Hence our work is an extension of these results.

\textbf{Related Work}. Usually in the literature, the complexity of interdiction problems is not discussed beyond NP-hardness. 
However, there are the following exceptions: Rutenburg \cite{DBLP:journals/amai/Rutenburg93} proves $\Sigma^p_2$-completeness for clique interdiction. Caprara, Carvalho, Lodi \& Woeginger \cite{DBLP:conf/ipco/CapraraCLW13} consider different bilevel knapsack formulations and prove $\Sigma^p_2$-completeness of the DeNegre \cite{10.5555/2231641} knapsack variant, which can be interpreted as an interdiction knapsack variant.
Tomasaz, Carvalho, Cordone \& Hosteins \cite{DBLP:journals/corr/abs-2406-01756} consider interdiction-fortification games and prove $\Sigma^p_2$-completeness of another knapsack interdiction variant. 
Fröhlich and Ruzika prove $\Sigma^p_2$-completeness of a facility location interdiction problem on graphs
(in contrast to our work, the interdictor attacks edges instead of vertices) \cite[Section 4]{DBLP:journals/tcs/FrohlichR21}.
Our work extends these results to more problem classes.
Finally, in a seminal paper, Lewis \& Yannakakis prove the very general result that the most vital vertex problem is NP-hard for every nontrivial hereditary graph property \cite{DBLP:journals/jcss/LewisY80}.
Our work adds to these results by showing that in many cases, interdiction is even harder than NP-hard.
As already mentioned, our work is based on the framework by Grüne and Wulf \cite{gruene2024completeness}, which itself is based on earlier ideas by Johannes \cite{johannes2011new}.

% !TEX root = ./main.tex
\section{Preliminaries}
\label{sec:prelim}

A \emph{language} is a set $L\subseteq \bin^*$.
A language $L$ is contained in $\Sigma^p_k$ iff there exists some polynomial-time computable function $V$ (verifier), and $m_1,m_2,\ldots, m_k = \poly(|w|)$ such that for all $w \in \set{0,1}^*$
\[
    w \in L \ \Leftrightarrow \ \exists y_1 \in \set{0,1}^{m_1} \ \forall y_2 \in \set{0,1}^{m_2} \ldots \ Q y_k \in \set{0,1}^{m_k}: V(w,y_1,y_2,\ldots,y_k) = 1,
\]
where $Q = \exists$, if $k$ is odd, and $Q = \forall$, if $k$ even.

An introduction to the polynomial hierarchy and the classes $\Sigma^p_k$ can be found in the book by Papadimitriou \cite{DBLP:books/daglib/0072413} or in the article by Jeroslow \cite{DBLP:journals/mp/Jeroslow85}.
An introduction specifically in the context of bilevel optimization can be found in the article of Woeginger \cite{DBLP:journals/4or/Woeginger21}.

A \emph{many-one-reduction} or \emph{Karp-reduction} from a language $L$ to a language $L'$ is a map $f : \bin^* \to \bin^*$ such that $w \in L$ iff $f(w) \in L'$ for all $w \in \bin^*$. 
A language $L$ is $\Sigma^p_k$-hard, if every $L' \in \Sigma^p_k$ can be reduced to $L$ with a polynomial-time many-one reduction. If $L$ is both $\Sigma^p_k$-hard and contained in $\Sigma^p_k$, it is $\Sigma^p_k$-complete.

For some cost function $c : U \to \R$, and some subset $U' \subseteq U$, we define the cost of the subset $U'$ as $c(U') := \sum_{u \in U'} c(u)$. For a map $f : A \to B$ and some subset $A' \subseteq A$, we define the image of the subset $A'$ as $f(A') = \set{f(a) : a \in A'}$.
% !TEX root = ./main.tex
\section{Framework}
\label{sec:framework} 

Since this work is an extension of the framework of Grüne \& Wulf, it becomes necessary to re-introduce the most important concepts of the framework. 
A more in-depth explanation of these concepts and their motivation in provided in the original paper \cite{gruene2024completeness}.
Grüne \& Wulf start by giving a precise definition of the objects they are interested in, linear optimization problems (LOP).
An example of an LOP problem is the vertex cover problem.

\begin{definition}[Linear Optimization Problem, from \cite{gruene2024completeness}]
\label{def:LOSPP}
    A linear optimization problem (or in short LOP)  $\Pi$ is a tuple $(\I, \U, \F, d, t)$, such that
    \begin{itemize}
        \item $\I \subseteq \{0,1\}^*$ is a language. We call $\I$ the set of instances of $\Pi$.
        \item To each instance $I \in \I$, there is some
        \begin{itemize}
            \item set $\U(I)$ which we call the universe associated to the instance $I$.
            \item set $\F(I) \subseteq \powerset{\U(I)}$ that we call the feasible solution set associated to the instance $I$. 
            \item function $d^{(I)}: \U(I) \rightarrow \Z$ mapping each universe element $e$ to its costs $d^{(I)}(e)$.
            \item threshold $t^{(I)} \in \Z$. 
        \end{itemize}
    \end{itemize}
    For $I \in \I$, we define the solution set $\sol(I) := \set{S \in \F(I) : d^{(I)}(S) \leq t^{(I)}}$ as the set of feasible solutions below the cost threshold. 
    The instance $I$ is a Yes-instance, if and only if $\sol(I) \neq \emptyset$.
    We assume (for LOP problems in NP) that it can be checked in polynomial time in $|I|$ whether some proposed set $F \subseteq \U(I)$ is feasible.
\end{definition}

\begin{description}
    \item[]\textsc{Vertex Cover}\hfill\\
    \textbf{Instances:} Graph $G = (V, E)$, number $k \in \N$.\\
    \textbf{Universe:} Vertex set $V =: \U$.\\
    \textbf{Feasible solution set:} The set of all vertex covers of $G$.\\
    \textbf{Solution set:} The set of all vertex covers of $G$ of size at most $k$.
\end{description}

It turns out that often times the mathematical discussion is a lot clearer, when one omits the concepts $\F, d^{(I)}$, and $t^{(I)}$, since for the abstract proof of the theorems only $\I$, $\U$, $\sol$ are important.
This leads to the following abstraction from the concept of an LOP problem:

\begin{definition}[Subset Search Problem (SSP), from \cite{gruene2024completeness}]
\label{def:SSP}
A subset search problem (or short SSP problem) $\Pi$ is a tuple $(\I, \U, \sol)$, such that
\begin{itemize}
    \item $\I \subseteq \set{0,1}^*$ is a language. We call $\I$ the set of instances of $\Pi$. 
    \item To each instance $I \in \I$, there is some set $\U(I)$ which we call the universe associated to the instance $I$. 
    \item To each instance $I \in \I$, there is some (potentially empty) set $\sol(I)\subseteq \powerset{\U(I)}$ which we call the solution set associated to the instance $I$.
\end{itemize}
\end{definition}

An instance of an SSP problem is a called yes-instance, if $\sol(I) \neq \emptyset$. 
Every LOP problem becomes an SSP problem with the definition $\sol(I) := \set{S \in \F(I) : d^{(I)}(S) \leq t^{(I)}}$.
We call this the \emph{SSP problem derived from an LOP problem}. Some problems are more naturally modelled as an SSP problem to begin with, rather than as an LOP problem.
For example, the satisfiability problem becomes an SSP problem with the following definition.

\begin{description}
    \item[]\textsc{Satisfiability}\hfill\\
    \textbf{Instances:} Literal set $L = \fromto{\ell_1}{\ell_n} \cup \fromto{\overline \ell_1}{\overline \ell_n}$, clause set $C = \fromto{C_1}{C_m}$ such that $C_j \subseteq L$ for all $j \in \fromto{1}{m}.$\\
    \textbf{Universe:} $L =: \U$.\\
    \textbf{Solution set:} The set of all subsets $L' \subseteq \U$ of the literals such that for all $i \in \fromto{1}{n}$ we have $|L' \cap \set{\ell_i, \overline \ell_i}| = 1$, and such that $|L' \cap C_j| \geq 1$ for all clauses $C_j \in C$.
\end{description}

Grüne \& Wulf introduce a new type of reduction, called \emph{SSP reduction}. 
Roughly speaking, a usual polynomial-time reduction from some problem $\Pi$ to another problem $\Pi'$ has the SSP property, 
if it comes with an additional injective map $f$ which embeds the universe of $\Pi$ into the universe of $\Pi'$, 
in such a way that $\Pi$ can be interpreted as a \enquote{subinstance} of $\Pi'$ and the topology of solutions is maintained in the subset that is induced by the image of $f$. 
More formally, let $W$ denote the image of $f$. We interpret $W$ as the subinstance of $\Pi$ contained in the instance of $\Pi'$ and we want the following two conditions to hold: 
1.) For every solution $S'$ of $\Pi'$, the set $f^{-1}(S' \cap W)$ is a solution of $\Pi$. 
2.) For, every solution $S$ of $\Pi$, the set $f(S)$ is a partial solution of $\Pi'$ and can be completed to a solution by using elements not in $W$.
These two conditions together are summarized in the single equation (\ref{eq:SSP}). We write $\Pi \leq_\text{SSP} \Pi'$ to denote that such a reduction exists.
We refer the reader to \cite{gruene2024completeness} for a more intuitive explanation of these properties and an example 3\textsc{Sat} $\leq_\text{SSP}$ \textsc{vertex cover}.

\begin{definition}[SSP Reduction, from \cite{gruene2024completeness}]
\label{def:ssp-reduction}
    Let $\Pi = (\I,\U,\sol)$ and $\Pi' = (\I',\U',\sol')$ be two SSP problems. We say that there is an SSP reduction from $\Pi$ to $\Pi'$, and write $\Pi \leqSSP \Pi'$, if
    \begin{itemize}
        \item There exists a function $g : \I \to \I'$ computable in polynomial time in the input size $|I|$, such that $I$ is a Yes-instance iff $g(I)$ is a Yes-instance (i.e. $\sol(I) \neq \emptyset$ iff $\sol'(g(I)) \neq \emptyset$).
        \item There exist functions $(f_I)_{I \in \I}$ computable in polynomial time in $|I|$ such that for all instances $I \in \I$, we have that $f_I : \U(I) \to \U'(g(I))$ is an injective function mapping from the universe of the instance $I$ to the universe of the instance $g(I)$ such that 
        \begin{equation}
            \set{f_I(S) : S \in \sol(I) } = \set{S' \cap f_I(\U(I)) : S' \in  \sol'(g(I))}. \label{eq:SSP}
        \end{equation}

    \end{itemize}
\end{definition}

It is shown in \cite{gruene2024completeness} that SSP reductions are transitive, i.e.\ $\Pi_1 \leqSSP \Pi_2$ and $\Pi_2 \leqSSP \Pi_3$ implies $\Pi_1 \leqSSP \Pi_3$.
The class of SSP-NP-complete problems is denoted by SSP-NPc and consists out of all SSP problems $\Pi$ that are polynomially-time verifiable and such that $\textsc{Satisfiability} \leq_\text{SSP} \Pi$. 
The main observation in \cite{gruene2024completeness} is that many classic problems are contained in the class SSP-NPc, 
and that this fact can be used to prove that their corresponding min-max versions are $\Sigma^p_2$-complete. 

% !TEX root = main.tex
\section{Minimum Cardinality Interdiction Problems}

In this section, we prove our $\Sigma^p_2$-completeness results regarding the minimum cardinality interdiction problem. 
Since we want to prove the theorem simultaneously for multiple problems at once, we require an abstract definition of the interdiction problem.
For this, consider the following definition.

\begin{definition}[Minimum Cardinality Interdiction Problem]
\label{def:min-card-interdiction-pi}
    Let an SSP problem $\Pi = (\I, \U, \sol)$ be given.
    The minimum cardinality interdiction problem associated to $\Pi$ is denoted by \textsc{Min Cardinality Interdiction-$\Pi$} and defined as follows:
    The input is an instance $I \in \I$ together with a number $k \in \N_0$.
    The question is whether
    \[
        \exists B \subseteq \U(I), \ |B| \leq k : \forall S \in \sol(I) : B \cap S \neq \emptyset.
    \]
\end{definition}

For the remainder of the paper, it is helpful to imagine this problem as a game between two players: the \emph{attacker} and the \emph{defender}.
That is, interdiction is an action performed by an attacker (or interdictor), who wishes to select a blocker of few elements to destroy all solutions.
On the other hand, the defender wants to find a solution to the problem after the attacker selected a blocker.
This leads to the following interpretation:
\begin{itemize}
    \item The set $\U(I)$ contains all the elements the attacker is allowed to attack. 
    \item The set $\sol(I)$ contains all the solutions the attacker wants to destroy such that the defender is not able to find any solution.
    For example, this could be the set of all Hamiltonian cycles, the set of all cliques of a certain size, etc.
\end{itemize}
Therefore, the formulation of the base problem as SSP problem $(\I, \U, \sol)$ determines which elements the attacker can attack, which he cannot attack (e.g. edges/vertices of a graph), and what the attacker's goal is.
We note that different formulations $(\I, \U, \sol)$ of the same problem are formally different SSP problems. They might be both SSP-NP-complete independent of each other, but require their own SSP-NP-completeness proof each.
For all the concrete problems studied in this paper, our complexity results hold for the natural choices of $(\I, \U, \sol)$ formally given in \cref{app:sec:problemDefinitions}.
Finally, note that if the base problem is an LOP problem, then by definition $\sol(I)$ is the set of feasible solutions below some threshold specified in the input. 
For example, applying \cref{def:min-card-interdiction-pi} to $\Pi = \textsc{Clique}$ yields the following decision problem: 
\begin{quote}
    \textbf{Problem}: $\textsc{Min Cardinality Interdiction-Clique}$

    \textbf{Input}: Graph $G = (V, E)$, numbers $k, t \in\N_0$

    \textbf{Question}: Does there exist a subset $B \subseteq V$ of size $|B| \leq k$ such that every clique of size at least $t$ shares at least one vertex with $B$? 
\end{quote}

Some more technical details, concerning the subtle differences between different variants of interdiction referenced in the literature as well as concerning the question whether $t$ can be chosen to be optimal are discussed in \cref{sec:different-variants-of-interdiction}.
We now proceed with the main result.
For the complexity analysis of minimum cardinality blocker, we first show the containment in the class $\Sigma^p_2$, if the nominal problem is in NP.

\begin{lemma}
\label{lem:containment}
    Let $\Pi = (\I, \U, \sol)$ be an SSP problem in $NP$, then \textsc{Min Cardinality Interdiction}-$\Pi$ is in $\Sigma^p_2$.
\end{lemma}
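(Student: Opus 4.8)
The plan is to read off the required $\exists\forall$ structure directly from the definition of the problem. For an input $(I,k)$ the question is whether
\[
    \exists B \subseteq \U(I),\ |B| \leq k : \forall S \in \sol(I) : B \cap S \neq \emptyset,
\]
which already has one existential and one universal quantifier block. The only reformulation I would make is to turn the bounded quantifier $\forall S \in \sol(I)$ into an unbounded quantifier over all subsets of the universe together with an implication, rewriting the condition as
\[
    \exists B \subseteq \U(I),\ |B| \leq k : \forall S \subseteq \U(I) : \bigl( S \in \sol(I) \Rightarrow B \cap S \neq \emptyset \bigr).
\]
Matching this against the definition of $\Sigma^p_2$, the first guess $y_1$ will encode $B$ and the second guess $y_2$ will encode $S$.

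First I would check the length bounds. As is standard in the SSP framework (and true for every problem considered in this paper), the universe $\U(I)$ has size polynomial in $|I|$, so any subset of $\U(I)$ (in particular the blocker $B$ and any candidate solution $S$) has an encoding of length $\poly(|I|)$. Hence $y_1$ and $y_2$ can be drawn from $\{0,1\}^{m_1}$ and $\{0,1\}^{m_2}$ with $m_1, m_2 = \poly(|I|)$, exactly as the definition demands.

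Next I would define the verifier $V$: on input $\bigl((I,k), B, S\bigr)$ it accepts if and only if (i)~$B \subseteq \U(I)$ and $|B| \leq k$, and (ii)~$S \notin \sol(I)$ or $B \cap S \neq \emptyset$. This runs in polynomial time, since the cardinality and subset tests in (i) are immediate, the intersection test in (ii) is trivial, and the membership test $S \in \sol(I)$ is polynomial-time decidable precisely because $\Pi$ is in NP, i.e.\ polynomial-time verifiable. One then checks the equivalence: a valid $B$ is accepted for every $S$ exactly when it meets every solution; an invalid guess $B$ fails clause (i) and is therefore refuted by every $S$, so it never acts as a witness; and a valid $B$ that misses some solution $S^\ast \in \sol(I)$ is refuted by the choice $y_2 = S^\ast$. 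Thus $\exists y_1\, \forall y_2 : V = 1$ holds iff $(I,k)$ is a yes-instance, placing the problem in $\Sigma^p_2$.

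I do not anticipate a genuine obstacle; the whole content is the correct translation into prenex $\exists\forall$ form with a polynomial-time matrix. The one delicate point is the passage from $\forall S \in \sol(I)$ to $\forall S \subseteq \U(I)$, which is sound only because membership in $\sol(I)$ is \emph{decidable} in polynomial time, not merely certifiable; this is exactly the role played by the hypothesis that $\Pi$ lies in NP, and it is the single place where that hypothesis enters.
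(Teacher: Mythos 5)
Your proposal is correct and follows essentially the same route as the paper's own proof: the existential guess $y_1$ encodes the blocker $B$, the universal guess $y_2$ encodes a candidate solution $S$, both of polynomial size since $|\U(I)| = \poly(|I|)$, and the verifier checks $|B|\leq k$ together with the implication $S \in \sol(I) \Rightarrow B \cap S \neq \emptyset$, using that solution membership is polynomial-time checkable for an SSP problem in NP. If anything, your explicit treatment of the case $S \notin \sol(I)$ (accepting such $y_2$ so that non-solutions never refute a valid blocker) spells out a detail the paper leaves implicit.
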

\begin{proof}
    We provide a polynomial time algorithm $V$ that verifies a specific solution $y_1, y_2$ of polynomial size for instance $I$ such that
    $$
        I \in L \ \Leftrightarrow \ \exists y_1 \in \{0,1\}^{m_1} \ \forall y_2 \in \{0,1\}^{m_2} : V(I, y_1, y_2) = 1.
    $$
    With the $\exists$-quantified $y_1$, we encode the blocker $B \subseteq \U(I)$.
    The encoding size of $y_1$ is polynomially bounded in the input size of $\Pi$ because $|\U(I)| \leq poly(|x|)$.
    Next, we encode the solution $S \in \sol(I)$ to the nominal problem $\Pi$ using the $\forall$-quantified $y_2$ within polynomial space.
    This is doable because the problem $\Pi$ is in NP (and thus $co\Pi$ is in coNP).
    At last, the verifier $V$ has to verify the correctness of the given solution provided by the $\exists$-quantified $y_1$ and $\forall$-quantified $y_2$.
    Checking whether $|B| \leq t$ and $B \cap S \neq \emptyset$ is trivial and checking whether $S \in \sol(I)$ is clearly in polynomial time because $\Pi$ is in SSP-NP.
    It follows that $\textsc{Min Cardinality Interdiction-}\Pi$ is in $\Sigma^p_2$.
\end{proof}

Next, we show the hardness of minimum cardinality interdiction problems as long as the nominal problem is NP-complete.
For this, we introduce the concept of invulnerability reductions that helps us to grasp the problems in a unified approach.
We describe this concept in the following subsection with the goal to obtain the following main theorem of the paper.

\begin{theorem}
\label{thm:min-card-interdiction}
    The problem \textsc{Min Cardinality interdiction-$\Pi$} is $\Sigma^p_2$-complete for all the following problems:
    independent set,
    clique,
    subset sum,
    knapsack,
    Hamiltonian path/cycle (directed/undirected),
    TSP,
    $k$-directed vertex disjoint paths ($k \geq 2$),
    Steiner tree,
    dominating set,
    set cover,
    hitting set, feedback vertex set,
    feedback arc set,
    uncapacitated facility location,
    $p$-center,
    $p$-median.
\end{theorem}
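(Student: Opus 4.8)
By \cref{lem:containment}, each of the sixteen listed interdiction problems already lies in $\Sigma^p_2$, since every nominal problem involved is in $\NP$; hence the entire content of the statement is the matching $\Sigma^p_2$-hardness. My plan is not to attack the problems one by one, but to isolate a single abstract property — the \emph{invulnerability property} announced in the text — and to prove a meta-theorem of the shape: if $\Pi$ is SSP-NP-complete and has the invulnerability property, then \textsc{Min Cardinality Interdiction-}$\Pi$ is $\Sigma^p_2$-hard. The theorem then reduces to verifying this property for each concrete problem, which for most of them I expect to follow from (possibly lightly modified versions of) the SSP reductions already catalogued by Grüne \& Wulf.

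The base of the argument is a single, explicitly $\Sigma^p_2$-hard interdiction problem, for which I would take \textsc{Min Cardinality Interdiction-Satisfiability}. The reduction would start from $\Sigma_2$-SAT, the $\Sigma^p_2$-complete problem of deciding a formula $\exists x\,\forall y\,\phi(x,y)$ with $\phi$ in CNF. The idea is to build a \textsc{Satisfiability} instance together with a budget $k$ so that the $\exists$-quantifier of the interdiction problem (the choice of a blocker $B$ with $|B| \leq k$) is forced to encode exactly an assignment to $x$: spending one unit of budget per $x$-variable $x_i$ to forbid one of the two literals $\set{\ell_i, \overline\ell_i}$ pins down its truth value for every surviving solution. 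The $\forall$-quantifier, ranging over all solutions $S \in \sol(I)$ disjoint from $B$, then ranges over assignments to $y$. After encoding $\neg\phi$ as a CNF via the usual DNF-to-CNF selector construction, the surviving solutions correspond precisely to falsifying assignments of $\phi(x,\cdot)$, so $B$ blocks \emph{all} solutions if and only if no surviving assignment falsifies $\phi$, i.e.\ if and only if $\forall y\,\phi(x,y)$ holds. Auxiliary literals and clauses do double duty: they carry out this encoding and make it strictly suboptimal for the attacker to spend budget anywhere other than on the $x$-variables.

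The engine that carries this single hardness result to all remaining problems is a strengthening of the SSP reduction. Recall that an SSP reduction comes with an injection $f_I$ whose image $W := f_I(\U(I))$ embeds the source instance into the target, and that by \eqref{eq:SSP} the traces $S' \cap W$ of target solutions are exactly the $f_I$-images of source solutions. Consequently, an attacker restricted to blocking inside $W$ faces precisely the source interdiction problem under the identification $B \mapsto f_I^{-1}(B)$. An \emph{invulnerability reduction} is an SSP reduction carrying the additional guarantee that restricting to $W$ costs the attacker nothing: no blocker using elements of $\U'(g(I)) \setminus W$ can beat an equally large blocker contained in $W$. Under this guarantee the optimal blocker size is preserved, so \textsc{Min Cardinality Interdiction} transfers along the reduction; and one checks that invulnerability reductions compose, inheriting transitivity from ordinary SSP reductions. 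Chaining these from the \textsc{Satisfiability} root then yields $\Sigma^p_2$-hardness for every problem reachable by such a chain.

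The main obstacle is exactly the invulnerability condition, which is where the unit-cost setting genuinely departs from the weighted framework of Grüne \& Wulf. A plain SSP reduction is free to introduce auxiliary elements outside $W$, and if even one such element lies in \emph{every} target solution — a mandatory gadget vertex or edge, say — then the attacker blocks all solutions with a single unit of budget and the reduction collapses. The delicate part of the proof is therefore to audit, and where necessary repair, each gadget so that (i) every element outside $W$ is avoidable by some surviving solution and (ii) the outside elements cannot be assembled into a small global hitting set. A robust device for (ii) is to replace each critical auxiliary element by $k+1$ parallel copies, so that blocking them all would already exhaust the budget and drive the attacker back into $W$. I expect this to be routine for the set-system problems (dominating set, set cover, hitting set, feedback vertex/arc set, uncapacitated facility location, $p$-center, $p$-median), but genuinely fiddly for the structurally rigid problems — Hamiltonian path/cycle, TSP, $k$-directed vertex disjoint paths, and Steiner tree — where solutions are so constrained that duplicating an auxiliary element without simultaneously creating unwanted new solutions (which would violate \eqref{eq:SSP}) requires real care. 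Verifying the invulnerability property for these last problems is, I expect, the true heart of the proof.
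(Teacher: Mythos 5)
Your proposed architecture (containment via \cref{lem:containment}, plus a single hardness root propagated by strengthened SSP reductions) is reasonable in outline, but your chosen root is provably the wrong one: \textsc{Min Cardinality Interdiction-Satisfiability}, in the formulation where the blocker forbids literals from $\set{\ell_i, \overline\ell_i}$, is \emph{not} $\Sigma^p_2$-hard unless the polynomial hierarchy collapses. The paper proves (\cref{lem:minCardInCoNP} and its application in \cref{sec:noMeta}) that this problem lies in coNP, and the obstruction is exactly the kind of attack your auxiliary clauses are supposed to rule out but cannot: since every solution contains exactly one of $\ell_i, \overline\ell_i$ for \emph{each} variable $i$, an attacker with budget $k \geq 2$ simply blocks both literals of any single variable (an $x$-variable, a $y$-variable, or one of your auxiliary selector variables --- it does not matter which) and thereby intersects every solution. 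No clause structure whatsoever can make this attack infeasible or ``strictly suboptimal,'' because it is a property of the universe $\U = L$ and the solution format, not of the clauses. Hence every instance with $k \geq 2$ is a trivial yes-instance, your reduction from $\Sigma_2$-\textsc{Sat} cannot be correct, and the root of your chain collapses. This is precisely why satisfiability is absent from the list in the theorem you are proving.

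The repair is to change what you reduce \emph{from}. The paper's route is to take as root the already-known $\Sigma^p_2$-completeness (from Grüne \& Wulf) of \textsc{Comb.\ Interdiction-$\Pi$}, where the attacker is restricted to a prescribed vulnerable set $C \subseteq \U(I)$, and then, \emph{for each problem $\Pi$ separately}, give a same-problem reduction \textsc{Comb.\ Interdiction-$\Pi$} $\to$ \textsc{Min Cardinality Interdiction-$\Pi$} that simulates the restriction to $C$ under unit costs (this is \cref{thm:meta-theorem}); your ``replace each critical element by $k+1$ parallel copies'' device is exactly the invulnerability-gadget idea used there, so the second half of your proposal is on target. (Alternatively, a satisfiability root can be rescued by changing the universe to the variable set $X$ only, so that the interdictor can force variables to false but can never hit both truth values of a variable; the paper shows that variant is $\Sigma^p_2$-complete, again via an invulnerability gadget from the combinatorial interdiction problem.) As written, however, your proof fails at its first step, and the failure is not a technicality but the central phenomenon the unit-cost setting introduces: some problems' interdiction versions genuinely drop to coNP, and a correct proof must start from a root that avoids this.
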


We remark that the case of satisfiability deserves special attention, which is discussed more thoroughly in \cref{sec:noMeta}.

\subsection{Invulnerability Reduction}

Our proof strategy for each of the problems listed in \Cref{thm:min-card-interdiction} is essentially the same.
In fact, we show that \cref{thm:min-card-interdiction} is actually a consequence of the following, more powerful \emph{meta-theorem}.
This meta-theorem catches the essence of an invulnarability reduction.

\begin{theorem}
\label{thm:meta-theorem}
    Consider an SSP-NP-complete problem $\Pi$.
    If there exists a polynomial-time reduction $g$ which receives as input a tuple $(I, C, k)$ of an instance $I$ of $\Pi$, some set $C \subseteq \U(I)$ and some $k \in \N_0$, and returns instances $I' := g(I, C, k)$ of $\Pi$, such that the following holds: 
     \begin{align*}
         \exists B \subseteq C : |B| \leq k \text{ and } B \cap S \neq \emptyset \ \forall S \in \sol(I) \qquad\qquad\qquad\qquad\qquad\qquad\qquad\qquad\quad\\
         \Leftrightarrow \ \ \exists B' \subseteq \U(I') : |B'| \leq k \text{ and } B' \cap S' \neq \emptyset \ \forall S' \in \sol(I').
     \end{align*}
     Then \textsc{Min Cardinality Interdiction-$\Pi$} is $\Sigma^p_2$-complete. 
\end{theorem}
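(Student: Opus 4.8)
My plan is to establish the two directions of $\Sigma^p_2$-completeness separately. Containment is immediate: since $\Pi$ is SSP-NP-complete it is polynomial-time verifiable and hence lies in \NP, so \Cref{lem:containment} already yields that \textsc{Min Cardinality Interdiction-}$\Pi$ is in $\Sigma^p_2$. All the work therefore goes into the hardness direction, and here the key idea is to use the hypothesis on $g$ as the last link in a chain of reductions that starts from a suitable $\Sigma^p_2$-hard \emph{restricted} interdiction problem. By a restricted interdiction instance of an SSP problem I mean a triple $(I, C, k)$ with the question whether the attacker can hit every solution using a blocker of size at most $k$ chosen only from the allowed set $C \subseteq \U(I)$; this is exactly the left-hand side of the biconditional assumed for $g$. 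The plan is to show that restricted interdiction for \textsc{Satisfiability} is $\Sigma^p_2$-hard, to transport this hardness along the SSP reduction $\textsc{Satisfiability} \leqSSP \Pi$ into restricted interdiction for $\Pi$, and finally to apply $g$ to turn the restriction into an ordinary (unrestricted) instance of \textsc{Min Cardinality Interdiction-}$\Pi$.

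The central step is the transport along the SSP reduction, and this is where the defining equation \eqref{eq:SSP} does the heavy lifting. Because $\Pi$ is SSP-NP-complete there is an SSP reduction $\textsc{Satisfiability} \leqSSP \Pi$, consisting of an instance map $g_0$ and injective universe embeddings $f = f_\phi \colon \U(\phi) \to \U(g_0(\phi))$. Given a restricted \textsc{Satisfiability} interdiction instance $(\phi, R, k)$ I would output $I := g_0(\phi)$, $C := f(R)$ and the same budget $k$. Writing $W := f(\U(\phi))$ for the embedded subinstance, the crucial lemma I need is that for every $B \subseteq \U(\phi)$ the set $B$ hits all $S \in \sol(\phi)$ if and only if $f(B)$ hits all $S' \in \sol(I)$. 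Both directions follow from \eqref{eq:SSP}: for a given solution $S'$ of $I$ its trace $S' \cap W$ equals $f(S)$ for some solution $S$ of $\phi$, and conversely every $f(S)$ arises as such a trace; combining this with the injectivity of $f$ (so that $f(B) \cap f(S) = f(B \cap S)$, and $f(B) \cap S' = f(B) \cap (S' \cap W)$ whenever $f(B) \subseteq W$) converts a hitting element on one side into a hitting element on the other. Since $f$ is injective it preserves cardinalities and restricts the correspondence bijectively to blockers inside $R$, respectively inside $C = f(R) \subseteq W$, so $(\phi, R, k)$ and $(I, C, k)$ are equivalent restricted instances.

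It then remains to remove the restriction. Feeding $(I, C, k)$ into the hypothesised reduction $g$ produces $I' := g(I, C, k)$, and the assumed biconditional states precisely that $(I, C, k)$ is a yes-instance of restricted interdiction if and only if $(I', k)$ is a yes-instance of the unrestricted problem \textsc{Min Cardinality Interdiction-}$\Pi$. Chaining the three maps $\phi \mapsto I = g_0(\phi)$, $R \mapsto C = f(R)$ and $(I, C, k) \mapsto I'$ yields a single polynomial-time many-one reduction from restricted interdiction for \textsc{Satisfiability} to \textsc{Min Cardinality Interdiction-}$\Pi$, all three maps being polynomial-time by the definition of SSP reductions and by the hypothesis on $g$. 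Together with containment this establishes $\Sigma^p_2$-completeness.

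The part I expect to be genuinely delicate is the base case, namely that restricted interdiction for \textsc{Satisfiability} is already $\Sigma^p_2$-hard; this is exactly why \textsc{Satisfiability} has to be handled by a dedicated argument (cf.\ \Cref{sec:noMeta}) rather than through the meta-theorem. The natural reduction from the canonical $\Sigma^p_2$-complete problem $\exists\forall$-\textsc{Sat} lets the attacker select the existentially quantified assignment by blocking one selector literal per variable, while the surviving solutions encode the universal player's responses; but one must prevent the attacker from \enquote{cheating} by spending budget to block both selector literals of a single variable, which in a naive encoding would destroy all solutions at no real cost. The restriction set $R$ together with a tight budget is what rules this out, and verifying that every winning blocker can be assumed to be a consistent selection is the main obstacle. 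On the meta-theorem side itself the only subtlety is to apply \eqref{eq:SSP} carefully in both directions and to keep track of the inclusion $f(B) \subseteq W$, which is automatic since the $\Pi$-side blockers are restricted to $C \subseteq W$.
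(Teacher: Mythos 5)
Your overall architecture coincides with the paper's: the paper routes hardness through exactly the intermediate problem you call \enquote{restricted interdiction} (there called \textsc{Comb.\ Interdiction-$\Pi$}), and the final step --- observing that the hypothesised map $g$ is literally a polynomial-time many-one reduction from that problem to \textsc{Min Cardinality Interdiction-$\Pi$}, with containment supplied by \cref{lem:containment} --- is identical to what you do. Your transport lemma along the SSP reduction is also correct: the argument via equation (\ref{eq:SSP}) together with injectivity of $f$ (and the observation $f(B) \subseteq W$) is sound. The crucial difference is that the paper does not re-prove the $\Sigma^p_2$-completeness of \textsc{Comb.\ Interdiction-$\Pi$}; it cites it from \cite{gruene2024completeness}, where it is established for \emph{every} SSP-NP-complete problem. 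That single citation subsumes both your base case and your transport step, which is why the paper's proof is essentially three lines long.

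This is where your proof has a genuine gap. The base case --- $\Sigma^p_2$-hardness of restricted interdiction for \textsc{Satisfiability} --- is not proven but only sketched, and the mechanism you propose for closing it is insufficient: restricting the attacker to the selector literals of the existential variables and imposing a tight budget does \emph{not} rule out cheating. Concretely, if $R$ contains both literals $x_i, \overline{x}_i$ of each existential variable and $k = |X| \geq 2$, the attacker may spend two units to block both literals of a single variable; since every assignment contains one of them, this hits all solutions of \emph{any} formula, so no-instances of $\exists\forall$-\textsc{Sat} (e.g.\ a formula that remains satisfiable under every existential assignment) are mapped to yes-instances. Preventing this requires a genuine gadget construction, not just the restriction plus a counting argument: one must additionally ensure that any blocker which leaves some existential variable pair untouched automatically loses (so that, by pigeonhole, a winning blocker within budget must touch every pair exactly once and hence be a consistent selection), and one must remove any constraint that the attacker could render unsatisfiable outright. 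As written, the crucial link in your chain is unsupported and your suggested repair would fail; the clean fix within your own structure is to do what the paper does and invoke the known $\Sigma^p_2$-completeness of the combinatorial interdiction problem for SSP-NP-complete problems from \cite{gruene2024completeness}.
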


It would be nice to have \cref{thm:meta-theorem} for all problems in the class SSP-NPc, not only those who admit a funciton $g$ with the properties as described above. However, we give a reasoning in \Cref{sec:noMeta} why such a generalization is not possible.
The rest of this section is devoted to the proof of \cref{thm:min-card-interdiction}.
In \cite{gruene2024completeness} the following more general version of interdiction was considered, where there is a set $C \subseteq \U(I)$ of so-called vulnerable elements.
One can also interpret the set of vulnerable elements $C$ as the elements that have cost of interdiction of $1$ while all other elements $\U(I) \setminus C$ have a cost of interdiction of $\infty$ and a blocker of small costs is sought.
This problem is called the \emph{combinatorial interdiction problem}.

\begin{definition}[Comb. Interdiction Problem, from \cite{gruene2024completeness}.]
    Let an SSP problem $\Pi = (\I, \U, \sol)$ be given.
    We define \textsc{Comb. Interdiction-$\Pi$} as follows:
    The input is an instance $I \in \I$, a number $k \in \N_0$, and a set $C \subseteq \U(I)$. The set $C$ is called the set of vulnerable elements.
    The question is whether
    \[
        \exists B \subseteq C, \ |B| \leq k : \forall S \in \sol(I) : B \cap S \neq \emptyset.
    \]
\end{definition}

It is proven in \cite{gruene2024completeness} that for every problem in SSP-NPc, the combinatorial interdiction problem is $\Sigma^p_2$-complete.
Now, let $\Pi$ be in SSP-NPc and $g$ be a reduction such that
    \begin{align*}
         \exists B \subseteq C : |B| \leq k \text{ and } B \cap S \neq \emptyset \ \forall S \in \sol(I) \qquad\qquad\qquad\qquad\qquad\qquad\qquad\qquad\quad\\
         \Leftrightarrow \ \ \exists B' \subseteq \U(I') : |B'| \leq k \text{ and } B' \cap S' \neq \emptyset \ \forall S' \in \sol(I'),
     \end{align*}
then $g$ is a reduction from \textsc{Comb. Interdiction-$\Pi$} to \textsc{Min Cardinality interdiction-$\Pi$}. This is because the first line is equivalent to the statement that instance $I$ is a yes-instance of \textsc{Comb. Interdiction-$\Pi$}, and the second line is equivalent to the statement that $I'$ is a yes-instance of \textsc{Min Cardinality interdiction-$\Pi$}.
It directly follows that \textsc{Min Cardinality interdiction-$\Pi$} is $\Sigma^p_2$-complete. This completes the proof of \cref{thm:meta-theorem}. 

We remark that while in some sense the proof is rather trivial, we still see a lot of value in explicitly stating a set of easy-to-check sufficient conditions that render some minimum-cardinality interdiction problem $\Sigma^p_2$-complete.

How can one find a function $g$ with the properties as described above? Often times it is possible by employing the following natural idea:
Given an instance of the comb. interdiction problem, let the set $D := \U(I) \setminus C$ be called the \emph{invulnerable} elements. 
For each problem separately we explain that a gadget for the invulnerable elements in $D$ exists, which
intuitively speaking guarantees that an attacker, no matter which $k$ elements of the universe they attack, can never render the elements of $D$ unusable.
On the other hand, we make sure that the \emph{invulnerability gadgets} do not meaningfully change the set of solutions.
The next section gives many examples of such gadgets.
We remark that we are not the first to come up with this natural idea.
For example, Zenklusen \cite{DBLP:journals/dam/Zenklusen10a} used the same idea in the context of matching interdiction.

\subsection{Different Variants of Interdiction}
\label{sec:different-variants-of-interdiction}
In this section, we discuss variants of interdiction problems that can be found in the literature.
For this, we study the relation of our definition of a minimum cardinality interdiction problems and the existing variants.
Additionally, we argue what the implications of the hardness of our minimum cardinality interdiction problems on the other variants are.

\begin{description}
    \item[1. Minimal Blocker Problem.]\hfill
        \begin{description}
            \item[Input] Instance $I$ with universe $U$, blocker cost function $c$, solution cost function $d$, and solution threshold $\tau$
            \item[Task] Find the minimum-cost set $\min_{B \subseteq U} c(B)$ such that for all solutions $S$ with $S \cap B = \emptyset$, we have $d(S) \leq \tau$.
        \end{description}
    \item[2. Full Decision Variant of Interdiction.]\hfill
        \begin{description}
            \item[Input] Instance $I$ with universe $U$, blocker cost function $c$, blocker budget $k$, solution cost function $d$, and solution threshold $\tau$
            \item[Task] Is there a set $B \subseteq U$ with $c(B) \leq k$ such that for all solutions $S$ with $S \cap B = \emptyset$, we have $d(S) \leq \tau$?
        \end{description}
    \item[3. Most Vital Elements Problem.]\hfill
        \begin{description}
            \item[Input] Instance $I$ with universe $U$, blocker cost function $c$, and solution cost function $d$
            \item[Task] Find a set $B \subseteq U$ with $c(B) \leq k$ such that the costs of all solutions $S \cap B = \emptyset$ are maximized, i.e. $\max_{B} \min_{S, S \cap B = \emptyset} d(S)$.
        \end{description}
\end{description}

Our goal is to show that all of the variants from above are at least as hard as our formulation of \emph{minimum cardinality interdiction} (\Cref{def:min-card-interdiction-pi}).
This results in the following theorem.

\begin{theorem}
    Let $\Pi = (\I, \U, \sol)$ be an SSP problem.
    Then the
    Most Vital Elements Problem of $\Pi$ (for all problems $\Pi$ in \Cref{thm:min-card-interdiction}),
    the Minimal Blocker Problem of $\Pi$, and
    the Full Decision Variant of Interdiction of $\Pi$
    are at least as hard to compute as \textsc{Min Cardinality Interdiction}-$\Pi$.
\end{theorem}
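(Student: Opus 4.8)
The plan is to show three separate reductions, one for each of the three variants, each reducing \textsc{Min Cardinality Interdiction}-$\Pi$ to the variant in question. In every case the reduction will be essentially the identity on the instance $I$, with the cost functions and thresholds chosen so that the variant's decision boils down exactly to the minimum-cardinality question from \cref{def:min-card-interdiction-pi}. The key observation that makes all three reductions work uniformly is that a blocker $B$ with $B \cap S \neq \emptyset$ for all $S \in \sol(I)$ is precisely a set whose deletion leaves no surviving solution; equivalently, the condition $B \cap S = \emptyset$ holds for \emph{no} solution $S \in \sol(I)$.

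First I would handle the \textbf{Full Decision Variant}. Given an instance $(I, k)$ of \textsc{Min Cardinality Interdiction}-$\Pi$ with solution set $\sol(I) = \set{S \in \F(I) : d^{(I)}(S) \leq t^{(I)}}$, I set the blocker cost function to be the unit cost $c \equiv 1$, keep the budget $k$, keep the nominal solution cost function $d := d^{(I)}$, and set the solution threshold to $\tau := t^{(I)} - 1$. Then ``for all $S$ with $S \cap B = \emptyset$ we have $d(S) \leq \tau$'' says that every feasible solution avoiding $B$ has cost strictly below $t^{(I)}$, i.e.\ no element of $\sol(I)$ avoids $B$, which is exactly $B \cap S \neq \emptyset$ for all $S \in \sol(I)$. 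Thus $(I, k)$ is a yes-instance of \textsc{Min Cardinality Interdiction}-$\Pi$ iff the constructed instance is a yes-instance of the Full Decision Variant. The same cost choices $c \equiv 1$, $d := d^{(I)}$, $\tau := t^{(I)} - 1$ give the reduction to the \textbf{Minimal Blocker Problem}: the minimum-cost blocker achieving the threshold condition has cost at most $k$ iff a cardinality-$k$ interdiction set exists, so one simply compares the optimal value against $k$.

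For the \textbf{Most Vital Elements Problem}, I again take $c \equiv 1$ and $d := d^{(I)}$, and read off the optimal value $\max_{B,\,|B|\le k} \min_{S,\,S \cap B = \emptyset} d(S)$. A blocker of size $k$ destroys all of $\sol(I)$ exactly when every feasible solution surviving the deletion has cost at least $t^{(I)}$ (so none lies in $\sol(I)$); hence the answer to \textsc{Min Cardinality Interdiction}-$\Pi$ is yes iff this optimal max-min value reaches $t^{(I)}$ (with the usual convention that the inner minimum over an empty surviving set is $+\infty$). Since the Most Vital Elements value determines the interdiction answer by a single threshold comparison, this variant is at least as hard. Here the only care needed is the boundary case where $B$ eliminates \emph{all} feasible solutions, which the convention handles.

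The \textbf{main obstacle} is not any single reduction --- each is a light repackaging of thresholds --- but rather ensuring the cost-function conventions line up cleanly across the integer-valued LOP setting. In particular one must confirm that shifting the threshold by one (from $t^{(I)}$ to $t^{(I)}-1$) correctly separates ``$d(S) \le t^{(I)}$'' from ``$d(S) > t^{(I)} - 1$'' over the integers, and that the empty-surviving-set case is treated consistently in the max-min formulation. For the Most Vital Elements variant, the restriction to the problems of \cref{thm:min-card-interdiction} is needed precisely so that $d^{(I)}$ and $t^{(I)}$ are available in a form compatible with the max-min objective; I would state this dependence explicitly rather than claim it for an arbitrary SSP problem.
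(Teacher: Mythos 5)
Your treatment of the Minimal Blocker Problem and the Full Decision Variant coincides with the paper's proof: keep the instance, use unit blocker costs so that the cardinality budget becomes the cost budget, and shift the solution threshold to $\tau := t^{(I)}-1$ so that ``no surviving feasible solution is good'' expresses exactly ``every $S \in \sol(I)$ meets $B$''. (The paper additionally splits off the case of problems that are natively SSP problems, such as Hamiltonian cycle, where there is no cost function $d$ and all three variants generalize \textsc{Min Cardinality Interdiction}-$\Pi$ verbatim; this case split is only implicit in your write-up, but that is harmless.)

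For the Most Vital Elements Problem, however, your argument misses the idea the paper actually needs, and the step you substitute for it does not go through as stated. The Most Vital Elements Problem carries no threshold in its input: the task is to find a set $B$ with $c(B)\le k$ maximizing $\min_{S \cap B = \emptyset} d(S)$. Your reduction decides interdiction by comparing ``the optimal max-min value'' against $t^{(I)}$, but under the paper's formulation a solver for this problem returns an optimal set $B^*$, and evaluating $\min_{S\cap B^*=\emptyset} d(S)$ (or even just comparing it with $t^{(I)}$) is itself the NP-hard nominal optimization problem, so this comparison is not available to a polynomial-time reduction; in effect you have quietly converted Most Vital Elements back into the Full Decision Variant, which has the threshold as an input. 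The paper resolves this differently: it observes that the interdiction question and the threshold-free most-vital-elements question coincide precisely when $\sol(I)$ consists exactly of the \emph{optimal} solutions, so that blocking all of $\sol(I)$ is equivalent to strictly degrading the optimum; it then defines \emph{tight} SSP reductions --- those whose produced instances satisfy $\set{F \in \F(I) : d(F) \le t}\neq\emptyset$ and $\set{F \in \F(I) : d(F)\le t-1}=\emptyset$ --- and checks that all reductions of the framework of Grüne and Wulf \cite{gruene2024completeness} are tight, so the hard interdiction instances have this property. That, and not the ``availability of $d^{(I)}$ and $t^{(I)}$ in a compatible form'' you conjecture, is the reason the theorem restricts the Most Vital Elements claim to the problems of \cref{thm:min-card-interdiction}. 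Finally, note an off-by-one in your criterion: with $\sol(I)=\set{S \in \F(I): d^{(I)}(S)\le t^{(I)}}$, a blocker succeeds iff every surviving feasible solution has cost strictly greater than $t^{(I)}$, so the max-min value must exceed $t^{(I)}$, not merely reach it.
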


The rest of this section is devoted to the proof of this theorem.
In our formulation of minimum cardinality interdiction, a set $B$ is sought, which intersects every solution in the set $\sol$ as given by the corresponding SSP problem.
We now have to distinguish between problems, which are naturally formulated as SSP problems (e.g. Hamiltonian cycle), and SSP problems, which are derived from an LOP problem (e.g. clique).
For natural SSP problems, the solution set $\sol$ consists of all solutions, i.e. there are no feasible solutions outside of $\sol$ due to the missing cost function $d$ on the solution elements.
Thus all of the three variants from above are generalizations of minimum cardinality interdiction:
\begin{enumerate}
    \item The \emph{minimal blocker problem} is the optimization version of the corresponding minimum cardinality interdiction problem.
    \item The \emph{full decision version of interdiction} is a generalization of the corresponding minimum cardinality interdiction problem because the latter assumes to have unit costs in the cost function $c$ for all elements from $U$.
    \item The \emph{most vital element problem} behaves analogous to (2).
\end{enumerate}
For SSP problems that are derived from an LOP problem, basically the same holds, however, with a modified and a technically more intricate argumentation.
Here the solution set is defined by $\sol = \{F \in \F : d(F) \leq t\}$ and we can find a reduction by generalization as follows:
\begin{enumerate}
    \item For \emph{minimal blocker problems}, we can set $\tau := t-1$.
    Then, we again have that the minimal blocker problem is the optimization version of the corresponding minimum cardinality interdiction problem.
    \item For \emph{minimal blocker problems}, we can also set $\tau := t-1$.
    Then, the full decision version is again a generalization of the corresponding minimum cardinality interdiction problem due to the fact that the latter has a unit cost function $c$.
    \item For \emph{most vital element problem}, the situation is more complicated.
    We first observe that the blocker part of $B \subseteq U$ with $c(B) \leq k$ is a generalization of the blocker part in minimum cardinality interdiction.
    The inner part on the nominal problem deserves special attention, though, due to the fact that the most vital element problem maximizes the objective while minimum cardinality interdiction blocks all solutions from the solution set $\sol$.
    We focus on this in the next paragraph.
\end{enumerate}

\par{\bf Reducing Minimum Cardinality Interdiction to Most Vital Elements.}
The concepts of minimum cardinality interdiction and most vital elements coincide if and only of the set $\sol$ contains exactly the optimal solutions, i.e. $\sol = \{F \in \F : d(F) \leq t^\star\}$, where $t^\star$ is optimal (i.e. minimal).
In order to assure that $\sol$ captures exactly the optimal solutions, we need to include this condition into the reduction.
In particular, the SSP reduction $(g, f)$ needs to guarantee that all instances $I$ are mapped to instances $g(I)$ such that all possible solutions are necessarily optimal.
In other words, $t$ is the optimal objective value of the LOP instance $g(I)$, since there are no feasible solutions, whose cost is even smaller than $t$.
We call SSP reductions that fulfill this criterion \emph{tight} and formally define them as follows.

\begin{definition}[Tight SSP reduction]
    Let $\Pi_1$ be an SSP problem and $\Pi_2 = (\I, \U, \F, d, t)$ be an LOP problem.
    Consider an SSP reduction $(g, (f_I)_{I \in \I})$ from $\Pi_1$ to (the SSP problem derived from) $\Pi_2$. 
    The reduction is called tight if for all yes-instances $I_1$ of $\Pi_1$, the corresponding instance $I_2 = g(I_1)$ of $\Pi_2$ with the associated parameter $t := t^{(I_2)}$ and associated cost function $d := d^{(I_2)}$, the following holds:
    \begin{align}
        \set{ F \in \F(I_2) : d(F) \leq t } \neq \emptyset \text{ and } \set{ F \in \F(I_2) : d(F) \leq t - 1} = \emptyset
    \end{align}
\end{definition}

All SSP reductions (to SSP problems derived form LOP problems) that can be found in \cite{gruene2024completeness} fulfill this definition and are thus tight.
Therefore, for all LOP problems (independent set, clique, knapsack, TSP, Steiner tree, dominating set, set cover, hitting set, feedback vertex set, feedback arc set), we obtain that the most vital element problem is at least as hard to compute as the minimum cardinality problem.

\paragraph*{Vertex/Edge Deletion Problems}
In this paper, we are concerned with finding a set $B$ such that $B \cap S \neq \emptyset$ for every solution $S$.
Note that this definition is meaningful even if the nominal problem is not graph-based.
However, in the special case where the nominal problem is graph-based, one could also consider a very related notion which is usually called \emph{vertex deletion problem} or \emph{edge deletion problem}.
Here, the question is how many vertices (edges) need to be deleted from the graph until some desired property is met.
Element deletion problems are well-studied in classical complexity theory for hereditary graph properties \cite{DBLP:journals/jcss/LewisY80} and in parameterized complexity theory for properties expressible by first order formulas \cite{DBLP:conf/mfcs/BannachCT24}.
In the general case, element deletion problems are not the same problem as our problem \textsc{Interdiction-$\Pi$}.
This is because for every set of deleted elements, the underlying instance is changed (vertices/edges are removed, which changes the graph). This is not the case for minimum cardinality interdiction problems as defined in this paper.
Thus, it is not possible to transfer the results of minimum cardinality interdiction directly to element deletion problems.
Albeit for the problems of clique and independent set, the $\Sigma^p_2$-completeness results hold for both minimum cardinality interdiction as well as for vertex deletion interdiction 
because for these problems the deletion of a vertex coincides with not taking this vertex into the solution. 
An analogous statement holds for edge deletions for the problems of directed/undirected Hamiltonian cycle/path, $k$-vertex-disjoint path, and Steiner tree.

% !TEX root = main.tex
\section{Invulnerability Reductions for Various Problems}
\label{sec:invulnerability-gadgets}
In this section, we show that a lot of well-known problems satisfy the assumptions of \cref{thm:meta-theorem}, i.e.\ it is possible to construct so-called invulnerability gadgets for them.
Note that this proves \cref{thm:min-card-interdiction}.
(More precisely, it proves the hardness part and the containment part is analogous to \cite{gruene2024completeness}).
Let in the following always $C \subseteq \U(I)$ denote the set of vulnerable elements, let $\U(I) \setminus C$ denote the set of invulnerable elements, and $k$ denote the budget of the attacker.

\textbf{Clique.}
We have $\U = V$ in this case.
For a given graph $G = (V,E)$, and a set $C \subseteq V$, we explain how to make $V \setminus C$ invulnerable.
We obtain a graph $G'$ from $G$ by replacing every vertex $v \in V \setminus C$ with an independent set $X_v$ of size $|X_v| = k+1$. 
For a vertex $v \in C$, we define $X_v := \set{v}$.
For all edges $uv$ in $G$, the new graph $G'$ contains the complete bipartite graph between $X_u$ and $X_v$.
Note that every clique of $G'$ contains at most one vertex from every set $X_v$. Hence the size of a maximum clique is the same in $G$ and $G'$. 
Since for $v \in V \setminus C$, we have $|X_v| = k+1$ and all vertices in $X_v$ have the same neighborhood, the attacker is not able to attack all vertices of $X_v$ at once because its budget of $k$ is too small.
Hence $v$ has been made \enquote{invulnerable}.
Furthermore, for every clique in $G$, we find a corresponding clique in $G'$ that contains at most one vertex from each set $X_v$. 
Together, this implies that an attacker can find a set $B' \subseteq V(G')$ of size $|B'| \leq k$ interdicting all maximum cliques in $G'$ if and only the attacker can find a set $B \subseteq C$ of size $|B| \leq k$ interdicting all maximum cliques of $G$, i.e.\ the assumptions of \cref{thm:meta-theorem} are met.

\textbf{Independent Set.} Analogous to clique in the complement graph.

\textbf{Dominating Set.} We have $\U = V$ in this case. 
To make a vertex $v \in V \setminus C$ invulnerable, we use the same construction as for the clique problem, with the only difference that $X_v$ is a clique instead of an independent set. 
Every optimal dominating set takes at most one vertex from each set $X_v$, but all $k+1$ vertices inside $X_v$ are equivalent. More precisely, they have the same (closed) neighborhood. 
This means for an invulnerable $v \in V \setminus C$, an attacker can not attack all $k+1$ vertices of $X_v$ simultaneously. 
Furthermore, it is easily seen that on the vulnerable vertices, the attacker interdicts all optimal dominating sets in the old graph if and only if the analogous attack interdicts all optimal dominating sets in the new graph.

\textbf{Hitting Set.} In this case, we have some universe $\U$, sets $Y_1,\dots,Y_t \subseteq \U$, and the problem is to find a minimal hitting set $X \subseteq \U$ hitting all the sets $Y_j$, $j =1,\dots,t$. 
To make an element $e \in \U$ invulnerable, simply delete it and replace it by $k + 1$ copies.
We modify the sets such that every set $Y_j$ that contained $e$ now contains the $k+1$ copies of $e$ instead. 
It is clear that all the copies of $e$ hit the same sets as $e$ (i.e.\ taking multiple copies into the hitting set does not offer any advantage).
Furthermore, it is not possible for the attacker to attack all $k+1$ copies simultaneously.
By an argument analogous to the above paragraphs, we are done.

\textbf{Set cover.} We have a ground set $E$, and a family $\mathcal{F}$ of sets $S_1, \dots, S_n \subseteq E$ over the ground set. 
We let $\U := \fromto{1}{n}$ and the goal is to pick a subset $I \subseteq \U$ of the indices such that $\bigcup_{i \in I} S_i = E$.
The attacker can attack up to $k$ of the indices $i \in I$ to forbid the corresponding sets from being picked.
We can make some index $i \in \U$ invulnerable, by simply duplicating the set $S_i$ a total amount of $k+1$ times.

Note that this satisfies the assumptions of \cref{thm:meta-theorem}, but modifies the family $\mathcal{F}$ such that the same set could appear multiple times in the family.
Alternatively, our construction can be adjusted such that this is avoided.
For this, we introduce $k+1$ new elements $e_1, \dots, e_{k+1}$ and $k+2$ new elements $f_1,\dots, f_{k+2}$ to the ground set $E$.
For each invulnerable index $i \in \fromto{1}{n} \setminus C$, we substitute $S_i$ by the $k+1$ sets $S_i^{(j)} = S_i \cup \{e_j\}$ for $j=1,\dots,k+1$.
Furthermore, we introduce $k+2$ new sets $S'_j := \fromto{e_1}{e_{k+1}} \cup \fromto{f_1}{f_{k+2}} \setminus \set{f_j}$ for $j =1, \dots, k+2$. This completes the description of the instance.
Note that the following holds: The elements $\fromto{f_1}{f_{k+2}}$ are covered by a set cover, if and only if it contains at least two sets of the form $S'_j$. 
Assuming this condition is true, all the elements $\fromto{e_1}{e_{k+1}}$ are already covered.
Hence all the different copies $S_i^{(j)}$ for $j=1,\dots,k+1$ are essentially equivalent.
Thus the attacker can not meaningfully attack all these copies simultaneously.
Note that the attacker can also not meaningfully attack the sets $S'_j$, since no matter which $k$ of them are attacked, 2 of them always remain.

\textbf{Steiner tree.} We have $\U = E$ in this case.
To make an edge $uv \in E \setminus C$ invulnerable, we replace it with $k+1$ parallel subdivided edges, i.e.\ we introduce vertices $w_1, \dots, w_{k+1}$ and edges $uw_i$ and $w_iv$ for $i =1,\dots, k+1$.
Every vulnerable edge $uv$ is replaced with only a single subdivided edge, i.e.\ a vertex $w$ and edges $uw, wv$.
It is clear that the number of edges of a minimum Steiner tree in the new instance is exactly two times as big as before, and the edge $uv$ has become effectively invulnerable.  

\textbf{Two vertex-disjoint path.}
We have $\U = A$ in this case.
The gadget is the same as for Steiner tree, except that the construction is directed, i.e. the arc $(u,v)$ is replaced either by the arcs $(u,w_i), (w_i, v)$ for $i=1,\dots,k+1$ (invulnerable case) or by the two arcs $(u,w), (w,v)$ (vulnerable case).
Since the paths in this problem have to be vertex disjoint, adding additional subdivided arcs between two existing vertices does not produce additional solutions because traveling from $u$ to $v$ renders all other paths from $u$ to $v$ unusable.

\textbf{Feedback arc set.} We have $\U = A$ in this case. 
Note that making some arc $a = (u,v) \in A \setminus C$ invulnerable means to ensure that it can be used in a minimal feedback arc set, no matter which $k$ arcs the attacker chooses.
This can be achieved the following way: Subdivide $a$ into $k + 1$ arcs. 
Clearly, the set of cycles in the new graph stays essentially the same. 
Furthermore, the attacker cannot block all $k + 1$ arcs from being chosen for the solution.
Choosing one of the subdivided pieces of $a$ in the new instance has the same effect as choosing $e$ in the old instance.

\textbf{Feedback vertex set.}
We have $\U = V$ in this case.
To make a vertex $v \in V \setminus C$ invulnerable, we split it into two vertices $v_\text{in}$ and $v_\text{out}$, 
put all incoming edges of the old vertex $v$ to $v_\text{in}$, 
put all outgoing edges of the old vertex $v$ to $v_\text{out}$,
and connect $v_\text{in}$ to $v_\text{out}$ with a directed path $P_v$ on $k+1$ vertices.
Note that in the new instance, a directed cycle uses one vertex of $P_v$ if and only if the cycle uses all vertices of $P_v$ if and only if a corresponding cycle in the old instance uses $v$.
By an analogous to argument to the feedback arc set case, we are done.

\textbf{Uncapacitated facility location.} We have $\U = J$ in this case, where $J$ is the set of sites for potential facilities. The attacker selects facility sites and forbids the decision maker to build a facility there. 
To make a facility site $j \in J \setminus C$ invulnerable, we can simply delete the site and replace it with $k+1$ identical sites, i.e.\ sites which have the same facility opening cost and service cost functions as the original facility $j$. 
Clearly, this way the attacker can not stop one of the equivalent facilities to be opened. On the other hand, since the facilities are identical (and uncapacitated), 
the decision maker has no advantage from opening two identical copies of the same facility.
Hence the new instance is identical to the old instance, with the only difference that facility site $j$ is invulnerable.

\textbf{$p$-median, $p$-center.} The difference between the facility location problem and the $p$-center and $p$-median problem is that in the latter two, there are no facility opening costs, at most $p$ facilities are allowed to be opened, 
and the service costs in the $p$-center problem are calculated using a minimum, and in the $p$-median problem they are calculated using the sum. 
All of these differences do not affect the argument from above, i.e.\ one can still make a facility site invulnerable by creating $k+1$ identical facilities. Hence the same argument holds.

\textbf{Subset Sum.}
We have $\U = \fromto{1}{n}$ and are given numbers $a_1, \dots, a_n \in \N$ and a target value $T$. The question is whether there exists $S \subseteq U$ with $\sum_{i \in S} a_i = T$. 
Consider some index $i \in \U \setminus C$. In order to make the index $i$ invulnerable, the first idea is to copy the number $a_i$ a total amount of $k+1$ times. 
But there is a problem with this construction -- if we do this, then the same number $a_i$ could be picked multiple times, which is not allowed in the original instance.
We need an additional gadget to make sure that $a_i$ gets used at most once for each $i$. This can be done the following way: 
The new instance contains the following numbers: Choose some number $B > 2k+2$ as a basis. 
For each $i \in C$, it contains the single number $B^{n(k+1)}a_i$. 
For each $i \in \fromto{1}{n} \setminus C$, it contains the $k+1$ distinct numbers  $c_i^{(j)} := B^{n(k+1)}a_i + B^{(i-1)(k+1) + j}$ for $j = 0,\dots, k$ as well as the $k+1$ distinct numbers $d_i^{(j)} := \sum_{\ell = 0,\ell \neq j}^k B^{(i-1)(k+1) + \ell}$ for $j = 0,\dots, k$ and the $k+1$ distinct numbers $e_i^{(j)} := B^{(i-1)(k+1) + j}$ for $j = 0,\dots, k$. We call $d_i^{(j)}$ and $e_i^{(j)}$ the helper numbers.
The new instance contains a total of $|C| + 3(k+1)(n - |C|)$ numbers. The new target value is 
$$
T' := B^{n(k+1)}T \ + \sum_{i \in \fromto{1}{n} \setminus C} \quad \sum_{\ell = 0}^k B^{(i-1)(k+1) + \ell}.
$$
Note that this has the following effect: 
Consider the representation of all involved numbers in base $B$. Let us call the digits $0$ up to $n(k+1) - 1$ the lower positions. 
Note that in the lower positions there can never be any carry, since for every lower position, all involved numbers have either a zero or one in that position and less than $B$ numbers have a one in the same place.
Due to that fact, in the lower positions the target $T'$ is reached if and only if for every $i \in \fromto{1}{n} \setminus C$, the corresponding \enquote{bitmask} is filled out (by this, we mean the positions $(i-1)(k+1)$ up to $i(k+1) - 1$).
This is achieved if and only if for some $j \in \fromto{0}{k}$ both the values $c^{(j)}_i$ and $d^{(j)}_i$ or both the values $d^{(j)}_i$ and $e^{(j)}_i$ are picked. In particular, at most one of the $k+1$ values $c^{(j)}_i$ for $j=0,\dots,k$ are picked.
In the upper positions, the target $T'$ is reached if and only if the corresponding choice in the old instance meets the target $T$.

Consider an attack of $k+1$ numbers by the attacker. For each $i \in \fromto{1}{n} \setminus C$ it holds that there exists a $j$ such that both $c_i^{(j)}$ and $d_i^{(j)}$ are not attacked. Likewise there exists a $j'$ such that both $d_i^{(j')}$ and $e_i^{(j')}$ are not attacked. 
That means that if $i$ is an invulnerable index, then no matter which $k+1$ values of  $c_i^{(j)}$, $d_i^{(j)}$ and  $e_i^{(j)}$ are attacked, 
a correct solution of subset sum will take for some $j$ either both $c_i^{(j)}$ and $d_i^{(j)}$ (which corresponds to taking $a_i$ in the original instance) 
or take both $d_i^{(j)}$ and $e_i^{(j)}$ (which corresponds to not taking $a_i$ in the original instance).
It follows that it is possible to block the new instance by attacking $k+1$ values if and only if it is possible to block the old instance by attacking $k+1$ of the vulnerable values. 
This was to show.
Finally, if the old numbers $a_1, \dots, a_n$ are pairwise distinct, the new numbers are as well. Hence the interdiction problem for subset sum is $\Sigma^p_2$-complete, even if all involved numbers are distinct.

\textbf{Knapsack.} The knapsack problem can be seen as a more general version of the subset sum problem, by creating for each $i$ 
from the subset sum instance a knapsack item with both profit $p_i = a_i$ and weight $w_i = a_i$, and setting both the weight and profit threshold to $T$.
Hence the $\Sigma^p_2$-completeness of \textsc{Min Cardinality Interdiction-Knapsack} follows as a consequence of the $\Sigma^p_2$-completeness of \textsc{Min Cardinality Interdiction-Subset Sum}. This holds even if all the involved knapsack items are distinct.

% !TEX root = main.tex
\subsection{An Invulnerability Reduction for Hamiltonian Cycle}
The invulnerability gadget for Hamiltonian cycle is the most involved of all our constructions, 
hence we devote a subsection to it. 
The main result in this section is that the minimum cardinality interdiction problem is $\Sigma^p_2$-complete for the nominal problems of both directed and undirected Hamiltonian cycle and path, as well as the TSP.

We present our reduction for the case of undirected Hamiltonian cycle and then argue how it can be adapted to the other cases. The main idea is to consider as an intermediate step only 3-regular graphs $G = (V, E)$, and then for a subset $C \subseteq E$ show how $E \setminus C$ can be made invulnerable. To this end, consider the SSP problem

\begin{description}
    \item[]\textsc{3Reg Ham}\hfill\\
    \textbf{Instances:} Undirected, 3-regular Graph $G = (V, E)$\\
    \textbf{Universe:} $\U := E$.\\
    \textbf{Solution set:} The set of all Hamiltonian cycles in $G$.
\end{description}

Recall that it is shown in \cite{gruene2024completeness} that \textsc{Hamiltonian Cycle} is SSP-NP-complete. We now require the stronger statement

\begin{lemma}
\label{lem:3-reg-ham-ssp}
    \textsc{3Reg Ham} is SSP-NP-complete.
\end{lemma}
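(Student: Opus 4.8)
The plan is to establish both requirements for SSP-NP-completeness: polynomial-time verifiability and SSP-hardness. Verifiability is immediate, since checking that a proposed edge set is a Hamiltonian cycle of a 3-regular graph is clearly polynomial. For hardness, I would give an SSP reduction \textsc{Hamiltonian Cycle} $\leqSSP$ \textsc{3Reg Ham}; since \textsc{Hamiltonian Cycle} is SSP-NP-complete by \cite{gruene2024completeness} and SSP reductions compose by transitivity, this yields $\textsc{Satisfiability} \leqSSP \textsc{3Reg Ham}$ and hence the claim.

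For the instance map $g$, I would transform an arbitrary graph $G=(V,E)$ into a 3-regular graph $G'$ by replacing every vertex $v$ of degree $d$ with a cubic vertex gadget $H_d$ carrying $d$ ports $p_1^v, \ldots, p_d^v$, one per edge incident to $v$; each original edge $e=uv$ then becomes a single inter-gadget edge joining the appropriate port of $H_{\deg u}$ to the appropriate port of $H_{\deg v}$. Ports will have internal degree $2$ (degree $3$ together with their incident inter-gadget edge) and all remaining gadget vertices degree exactly $3$, so that $G'$ is 3-regular. The accompanying universe map $f$ sends each edge $e \in E$ to the inter-gadget edge representing it; $f$ is injective, its image $f(E)$ is precisely the set of inter-gadget edges, and no internal gadget edge lies in the image.

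The required gadget $H_d$ must satisfy two properties: (i) for every pair of distinct ports $p_i, p_j$ there is a Hamiltonian path of $H_d$ with endpoints $p_i, p_j$ (so that any pair of edges used at $v$ in $G$ can be realized in $G'$), and (ii) in any Hamiltonian cycle of $G'$ the vertices of $H_d$ are traversed by a single sub-path entering and leaving through exactly two ports (so that no cycle can use more than two edges at $v$). Granting these, I would verify the defining equation (\ref{eq:SSP}) of \cref{def:ssp-reduction} in both directions. Given a Hamiltonian cycle $S$ of $G$, routing through each gadget along the two incident edges at each vertex via property (i) produces a Hamiltonian cycle $S'$ of $G'$ with $S' \cap f(E) = f(S)$. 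Conversely, given a Hamiltonian cycle $S'$ of $G'$, property (ii) forces exactly two port edges per gadget to be used, so $f^{-1}(S' \cap f(E))$ selects exactly two edges at every vertex of $G$ and forms a single spanning cycle; this is a Hamiltonian cycle $S$ of $G$ with $f(S)=S'\cap f(E)$. Low-degree vertices ($d \le 2$) are handled by small dedicated gadgets (a degree-$0$ or degree-$1$ vertex makes $G$ non-Hamiltonian, and is mapped accordingly, while a degree-$2$ vertex is expanded so that both incident edges are forced), and one checks that every gadget vertex ends with degree exactly $3$.

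I expect the main obstacle to be the construction and verification of the cubic gadget $H_d$ simultaneously satisfying (i) and (ii): the \enquote{any two ports} tracing property (i) rules out naive cycle-like gadgets, which only connect cyclically adjacent ports, while the single-traversal property (ii) must exclude the possibility that a Hamiltonian cycle covers $H_d$ by two or more vertex-disjoint sub-paths through four or more ports. Such cubic vertex gadgets are the technical core of existing cubic-Hamiltonicity hardness proofs; the genuinely new work here is to fix one concrete such gadget, pin down the port-to-edge correspondence, and check that the induced map $f$ makes the reduction an SSP reduction, i.e.\ that equation (\ref{eq:SSP}) holds exactly rather than merely preserving yes/no answers.
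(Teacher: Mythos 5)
Your reduction skeleton (\textsc{Hamiltonian Cycle} $\leqSSP$ \textsc{3Reg Ham}, then transitivity with the results of \cite{gruene2024completeness}) is structurally fine, but the proof has a genuine gap at exactly the point you flag as the main obstacle: the cubic vertex gadget $H_d$ satisfying (i) and (ii) is never constructed, and your justification for deferring it --- that such gadgets are \enquote{the technical core of existing cubic-Hamiltonicity hardness proofs} --- is factually incorrect. The known hardness proof for Hamiltonicity in 3-regular graphs, due to Garey, Johnson \& Tarjan \cite{DBLP:journals/siamcomp/GareyJT76}, does not perform degree reduction by local vertex replacement at all; it reduces from \textsc{3Sat} using variable, clause, and exclusive-or gadgets. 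So there is no existing gadget you can import, and the entire weight of your argument rests on an unsupported existence claim.

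There is moreover a concrete structural reason to doubt that $H_d$ exists once $d \geq 4$. In every known construction that forces a Hamiltonian cycle of the host graph to traverse a blown-up vertex exactly once --- including this paper's own invulnerability gadget for Hamiltonian cycle (cf.\ the component argument in \cref{lem:hamcycle-only-if}) --- the single-traversal property comes from a parity argument: the cycle uses an even number, at least two, of the gadget's attachment edges, each at most once, so with only \emph{three} attachment points it must use exactly two. That argument is specific to degree three. For $d \geq 4$, your property (ii) must exclude covers of $H_d$ by two vertex-disjoint paths whose four endpoints are distinct ports, while property (i) simultaneously demands a Hamiltonian path between \emph{every} pair of ports; these requirements pull in opposite directions, and natural candidates (a suitably subdivided $K_4$, the 3-cube minus a matching on the ports, etc.) all admit exactly the two-path covers that (ii) forbids. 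This tension is precisely why the paper proves \cref{lem:3-reg-ham-ssp} by a different route: it takes the existing Garey--Johnson--Tarjan reduction from \textsc{3Sat} and merely verifies that it is already an SSP reduction, using the two edges $e(x_i), e(\overline{x}_i)$ that the construction provides for each variable as the image of the literal universe under $f$. Unless you can actually exhibit $H_d$ with properties (i) and (ii) --- which I suspect is impossible for $d \geq 4$, and which no existing work supplies --- your proof does not go through; verifying the SSP property of the Garey--Johnson--Tarjan reduction, as the paper does, is the natural repair.
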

\begin{proof}
    Garey, Johnson \& Tarjan \cite{DBLP:journals/siamcomp/GareyJT76} give a reduction from \textsc{3Sat} to  \textsc{3Reg Ham}, such that for every variable $x_i$ in the \textsc{3Sat} instance the graph $G$ has two distinct edges $e(x_i)$ and $e(\overline x_i)$ (compare Figure 7 in \cite{DBLP:journals/siamcomp/GareyJT76}). Let $E' := \bigcup_i \set{e(x_i), e(\overline x_i)}$ be the set of all these edges. For some assignment $\alpha$ of the \textsc{3Sat} variables, we say that $\alpha$ corresponds to the edge set $E_\alpha$ defined by $\set{e(x_i) : \alpha(x_i) = 1} \cup \set{e(\overline x_i) : \alpha(x_i) = 0}$.
    Garey, Johnson \& Tarjan show that there is a bijection between satisfying assignments and edge sets $E'' \subseteq E'$ that can be subset of a Hamiltonian cycle. More formally: 1.) For every satisfying assignment $\alpha$, 
    if one considers the set $E_\alpha \subseteq E'$ of edges corresponding to that assignment, 
    there exists a Hamiltonian cycle $H$ extending $E_\alpha$, i.e.\ $H \cap E' = E_\alpha$. 
    2.) For every Hamiltonian cycle $H$, we have that $H \cap E'$ equals $E_\alpha$ for some satisfying assignment $\alpha$.
    In total, 1.) and 2.) together show that the reduction in \cite{DBLP:journals/siamcomp/GareyJT76} is an SSP-reduction. (By defining $f(x_i) := e(x_i), f(\overline x_i) := e(\overline x_i)$.)
\end{proof}

We remark that it follows from \cite{akiyama1980np,DBLP:journals/siamcomp/GareyJT76} by the same argument that the problem is even SSP-NP-complete if restricted to 3-regular, bipartite, planar, 2-connected graphs. However, for our arguments it suffices to consider 3-regular graphs.

Consider now an instance of \textsc{3Reg Ham}, i.e. a 3-regular undirected graph  $G = (V, E)$. Let $C \subseteq E$ be a subset of the edges and $k \in \N_0$ the attacker's budget. We call $C$ the vulnerable edges. Let $D := E(G) \setminus C$.
In the remainder of this section we describe and prove a construction how to make the edges in $D$ invulnerable.
We quickly sketch the main idea: To make an edge $e = ab$ invulnerable, we enlarge it by replacing it with a large clique $W'_{ab}$ making sure that $e$ can be traversed no matter which $k$ edges inside $W'_{ab}$  are attacked. 
We also blow up each vertex $a$ of the original graph into a clique $W_a$.
However, this introduces new vertices into the instance, and we need to make sure that a Hamiltonian cycle can always trivially visit all the new vertices.
At the same time however, it should still hold that a Hamiltonian cycle in the new graph should be able to enter and exit these new objects $W_a$ and $W'_{ab}$ at most once, since otherwise a corresponding cycle in the old graph $G$ would visit edges or vertices twice, which is of course forbidden.
We achieve this by associating to each edge $e = ab$ a star of edges $F_{ab}$ and argue that a Hamiltonian cycle can use (essentially) at most one edge of each star $F_{ab}$. 
Furthermore, we will show that the fact that $G$ is 3-regular implies that each clique $W_a$ can be traversed (essentially) only once.

We are ready to begin with the construction.
First, let the directed graph $\overrightarrow{G}$ result from $G$ by orienting its edges arbitrarily and $k$ be the budget of the attacker.
We construct an undirected graph $G' = (V', E')$ from $\overrightarrow{G}$ as follows: 
Let $n := |V(G)|$.
For each vertex $a \in V(\overrightarrow{G})$, let $d_a$ be the out-degree of $a$, and let $W_a$ be a set of $2d_a + 4k + 1$ vertices.
For each invulnerable edge $ab \in D$ in the old graph, let  $W'_{ab}$ be a set of $4k$ vertices.
The vertex set $V(G')$ of the new graph $G'$ is then defined by
\[ V(G') = \bigcup_{a \in V} W_a \cup \bigcup_{ab \in D} W'_{ab}.\]
\begin{figure}
    \centering
    \includegraphics[scale=1.0]{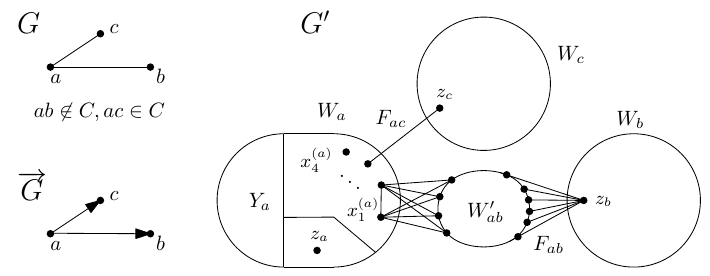}
    \caption{Invulnerability gagdet for Hamiltonian cycle which makes the edge $ab$ invulnerable while the edge $ac$ remains vulnerable.}
    \label{fig:ham-cycle-invulnerability}
\end{figure}
We further partition $W_a$ into three disjoint parts $W_a = X_a \cup Y_a \cup \set{z_a}$ of size $|X_a| = 2d_a$ and $|Y_a| = 4k$ and $|\set{z_a}| = 1$.
We denote the vertices of $X_a$ by $x^{(a)}_1, \dots, x^{(a)}_{2d_a}$.
The edges of $G'$ are defined as follows:
First, we let $W_a$ be a clique for all $v \in V$.
Second, for each vertex $a \in V$ in $\overrightarrow{G}$, let $e_1, \dots, e_{d_a}$ be its outgoing edges.
For each $i = 1, \dots, d_a$, consider the $i$-th outgoing edge $e_i = (a, b)$ of $a$, where $b$ is the corresponding neighbor. 
If $e_i \in C$, i.e. $e_i$ is vulnerable, then $G'$ contains simply the single edge $x^{(a)}_{2i-1}z_b$. 
In the other case, i.e.\ $e_i \in D$ is invulnerable, then $G'$ contains an invulnerability gadget as depicted in \cref{fig:ham-cycle-invulnerability} induced on the vertices $\set{x^{(a)}_{2i-1}, x^{(a)}_{2i}} \cup W'_{ab} \cup \set{z_b}$.
The invulnerability gadget consists out of a clique on the vertex set $\set{x^{(a)}_{2i-1}, x^{(a)}_{2i}} \cup W'_{ab}$, together with all edges from the set $W'_{ab}$ to the vertex $z_b$, i.e.\ a star centered at $z_b$ that has $W'_{ab}$ as its leaves.
Let $F_{ab}$ denote this star. 
Finally, for all vulnerable edges $ab \in D$, we also define $F_{ab}$ to be the single edge $x^{(a)}_{2i-1}z_b$ that connects $W_a$ to $W_b$.
This can be interpreted as a trivial star centered at $z_b$ with only one leaf.
This completes the description of $G'$.

The overall idea of this construction is that the cliques of $W_a$ cannot be attacked because they have at least $k$ vertices.
Thus it is always possible to find a path visiting all vertices of $W_a$.
Additionally, a star $F_{ab}$ of size larger than $k$ makes the edge $ab \in E$ invulnerable because at most $k$ many of the edges can be attacked.
Thus there is always the possibility to travel over one edge of $F_{ab}$ which corresponds to using edge $ab$ in the original graph.
On the other hand, since every edge of the star is connected to the same vertex $z_b$, we have that the star $F_{ab}$ can be used (essentially) exactly once.
Thus only the stars of size one (which correspond to the vulnerable edges) are attackable.
We now have everything that we need to prove our main result of this section. 

\begin{theorem}
\label{thm:ham-cycle-interdiction}
Minimum cardinality interdiction for \textsc{Undirected Hamiltonian Cycle} is $\Sigma^p_2$-complete.
\end{theorem}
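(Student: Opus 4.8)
The plan is to follow the strategy underlying \cref{thm:meta-theorem}. By \cref{lem:3-reg-ham-ssp} the problem \textsc{3Reg Ham} is SSP-NP-complete, so by the result of \cite{gruene2024completeness} the problem \textsc{Comb. Interdiction-3Reg Ham} is $\Sigma^p_2$-complete. It therefore suffices to verify that the construction $g : (G, C, k) \mapsto (G', k)$ described above is a polynomial-time many-one reduction from \textsc{Comb. Interdiction-3Reg Ham} to \textsc{Min Cardinality Interdiction-Undirected Hamiltonian Cycle}; concretely, writing $\sol(G)$ for the Hamiltonian cycles of $G$ and $\sol(G')$ for those of $G'$, I would establish the equivalence of \cref{thm:meta-theorem}. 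Containment in $\Sigma^p_2$ is immediate from \cref{lem:containment}, since undirected Hamiltonian cycle lies in \NP, and $G'$ is plainly computable in polynomial time because $|V(G')|$ and $|E(G')|$ are polynomial in $|V(G)|$ and $k$.

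The core of the argument is a structural correspondence between the Hamiltonian cycles of $G$ and those of $G'$. I would first prove the invariant that a set $H'$ of edges is a Hamiltonian cycle of $G'$ \emph{if and only if} it is induced by a Hamiltonian cycle $H$ of $G$, in the sense that for every edge $ab$ of $G$ the cycle $H$ uses $ab$ exactly when $H'$ uses (exactly) one edge of the star $F_{ab}$. For the direction from $H$ to $H'$ I would give an explicit routing: the two edges of $H$ incident to a vertex $a$ determine where the cycle enters and leaves the clique $W_a$; every remaining vertex of $W_a$ and the gadget $W'_{ab}$ of each \emph{unused} invulnerable edge $ab$ is collected by a detour that leaves $W_a$ through the port $x^{(a)}_{2i-1}$ and returns through $x^{(a)}_{2i}$; and the gadget of each \emph{used} invulnerable edge is crossed from its ports to the centre $z_b$ along a single star edge (see \cref{fig:ham-cycle-invulnerability}). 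Because each $W_a$ and each $W'_{ab}$ is a clique, the Hamiltonian subpaths with the prescribed endpoints demanded by this routing always exist.

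For the reverse direction, and this is where $3$-regularity is essential, I would show that $H'$ must enter and leave every clique $W_a$ exactly once, so that contracting each $W_a$ to the vertex $a$ and each used star to its edge yields a genuine Hamiltonian cycle $H$ of $G$ with the stated edge-usage invariant. The main leverage is that all edges of a star $F_{ab}$ meet in the single vertex $z_b$, and that $z_a$ is the unique vertex of $W_a$ receiving the (at most three, by $3$-regularity) incoming star connections; careful degree bookkeeping at the $z_a$ then forces exactly one boundary crossing into and one out of each $W_a$ and rules out pathological cycles that would visit some $W_a$ twice. Granting this correspondence, the blocker equivalence follows. If $B \subseteq C$ blocks all Hamiltonian cycles of $G$ with $|B| \leq k$, put $B' := \set{ F_{ab} : ab \in B }$, using that $F_{ab}$ is a single edge for each vulnerable $ab$; then $|B'| = |B| \leq k$, and every Hamiltonian cycle of $G'$ projects to a Hamiltonian cycle of $G$ meeting $B$, hence uses a blocked star edge and meets $B'$. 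Conversely, given a blocker $B'$ of $G'$ with $|B'| \leq k$, I would observe that only the single edges of the vulnerable stars can be usefully attacked: each $W_a$ has at least $4k+1$ vertices and each invulnerable star has $4k$ edges, so after deleting the at most $k$ edges of $B'$ all these objects remain traceable with the endpoints required by the routing. Setting $B := \set{ ab \in C : B' \cap F_{ab} \neq \emptyset }$ gives $|B| \leq |B'| \leq k$; if some Hamiltonian cycle of $G$ avoided $B$, the robustness just described would produce a Hamiltonian cycle of $G'$ avoiding $B'$, contradicting that $B'$ is a blocker.

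I expect the main obstacle to be the reverse structural direction: ruling out Hamiltonian cycles of $G'$ that use two edges of some star $F_{ab}$ or that traverse a clique $W_a$ more than once, since such cycles need not project to a simple cycle of $G$. Making the degree argument at the vertices $z_a$ fully rigorous, together with the \enquote{a large clique minus $k$ edges is still traceable between prescribed endpoints} robustness estimates that the blocker equivalence relies on, is where the genuine work of the proof lies.
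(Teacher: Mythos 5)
Your overall strategy (reduce from \textsc{Comb. Interdiction-3Reg Ham} using the gadget graph $G'$, with containment from \cref{lem:containment}) is the paper's strategy, and your second blocker direction matches the paper's \cref{lem:hamcycle-if}. However, there is a genuine gap in your structural claim, and it sits exactly where you predicted the ``main obstacle'' would be. You propose to prove that a set $H'$ is a Hamiltonian cycle of $G'$ \emph{if and only if} it is induced by a Hamiltonian cycle of $G$, and to establish the reverse direction by degree bookkeeping at the vertices $z_a$ that \emph{rules out} cycles using two edges of some star $F_{ab}$. This cannot work, because such cycles genuinely exist: a Hamiltonian cycle of $G'$ may enter the gadget clique $W'_{ab}$ from $X_a$, dip to $z_b$ along one star edge, return along a second star edge, exit back through $X_a$, and then collect $W_b \setminus \set{z_b}$ at a later point through the stars of the outgoing edges of $b$. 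A parity argument at $z_a$ and $z_b$ only forbids this when the relevant vertices have out-degree $1$ in the chosen orientation; whenever the tail and head have out-degree at least $2$ (which is unavoidable, since the average out-degree in an orientation of a $3$-regular graph is $3/2$), no local obstruction exists. Such a cycle does not project to a Hamiltonian cycle of $G$, so your if-and-only-if correspondence is false as stated, and your first blocker direction (``every Hamiltonian cycle of $G'$ projects to a Hamiltonian cycle of $G$ meeting $B$'') inherits the gap.

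The paper's proof of \cref{lem:hamcycle-only-if} resolves this differently: it does not rule out star-doubling cycles, but \emph{reroutes} them. Given a Hamiltonian cycle $H'$ of $G' - B'$ with $|H' \cap F_{ab}| = 2$, one uses the fact that $|B'| \leq k$ while $|W'_{ab}| = 4k$ and $|Y_b| = 4k$, so both $W'_{ab}$ and $W_b$ contain at least $2k$ vertices incident to no edge of $B'$; these free vertices allow $H'$ to be modified, \emph{inside} $G' - B'$, into a Hamiltonian cycle with $H' \cap F_{ab} = \emptyset$, after which each component of $G' - F$ is traversed exactly once and the projection to $G$ is well defined. Note that this rerouting-under-the-blocker argument is not an optional refinement: it is also what your first direction needs (an arbitrary Hamiltonian cycle of $G' - B'$ must first be converted into a projectable one without using attacked edges before it can contradict the assumption that $B$ blocks $G$). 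This is also the reason the gadget sizes are $4k$ and $2d_a + 4k + 1$ rather than merely $k+1$: they must supply spare free vertices for repairs, not only resist deletion. To fix your proposal, replace the claimed bijection by the weaker statement ``if $G' - B'$ has a Hamiltonian cycle, then it has one using at most one edge per star,'' proved by the free-vertex rerouting, and run both blocker directions by contraposition as in \cref{lem:hamcycle-if,lem:hamcycle-only-if}.
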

\begin{proof}
Due to \cite{gruene2024completeness}, and \cref{lem:3-reg-ham-ssp}, we have that \textsc{Comb. Interdiction-3Reg Ham} is $\Sigma^p_2$-complete.
We claim that the construction of $G'$ yields a correct reduction from \textsc{Comb. Interdiction-3Reg Ham} to \textsc{Min Cardinality Interdiction-HamCycle}.
Indeed, the following two \cref{lem:hamcycle-if,lem:hamcycle-only-if} show that yes-instances of one problem get transformed into yes-instances of the other problem. 
\end{proof}
We remark that the 3-regularity of the graph is not maintained by the reduction.
(Indeed, an argument similar to the arguments given later in \cref{sec:noMeta} shows that the interdiction problem for Hamiltonian cycle restriced to only 3-regular graphs is likely not $\Sigma^p_2$-complete).

\begin{lemma}
\label{lem:hamcycle-if}
    If there exists $B' \subseteq E'$ of size $|B'| \leq k$, such that $G' - B'$ has no Hamiltonian cycle, then there is $B \subseteq C$ of size $|B| \leq k$ such that $G - B$ has no Hamiltonian cycle.
\end{lemma}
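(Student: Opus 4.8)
The lemma we must prove is the "if" direction of the reduction's correctness: given a small blocker $B' \subseteq E'$ of size $\le k$ that destroys all Hamiltonian cycles of $G'$, we must construct a blocker $B \subseteq C$ of size $\le k$ destroying all Hamiltonian cycles of $G$. The natural plan is to push $B'$ down to the original graph $G$ through the structure of the invulnerability gadget. The key observation is that only the trivial (size-one) stars $F_{ab}$ corresponding to vulnerable edges $ab \in C$ are attackable: each non-trivial star has more than $k$ leaves, so $B'$ cannot contain all edges of any non-trivial star.

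\begin{proof}
Let $B' \subseteq E'$ with $|B'| \le k$ be such that $G' - B'$ has no Hamiltonian cycle. We define $B \subseteq C$ as follows. For each attacked edge $e' \in B'$, if $e'$ is the single connecting edge $x^{(a)}_{2i-1} z_b$ of a trivial star $F_{ab}$ corresponding to a vulnerable edge $ab \in C$, we add $ab$ to $B$; all other edges of $B'$ we discard. Clearly $|B| \le |B'| \le k$ and $B \subseteq C$. It remains to show that $G - B$ has no Hamiltonian cycle.

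We argue the contrapositive: assuming $G - B$ has a Hamiltonian cycle $H$, we construct a Hamiltonian cycle $H'$ in $G' - B'$, contradicting the hypothesis. The plan is to lift $H$ edge by edge. For each vertex $a$ visited by $H$, the cycle $H'$ traverses the clique $W_a$; since $|W_a| = 2d_a + 4k + 1 > k$ and $W_a$ is a clique, the at most $k$ attacked edges inside $W_a$ cannot disconnect it, so we can always route a Hamiltonian path through all vertices of $W_a$ entering at the appropriate $x$-vertex and exiting at another. For each edge $ab \in H$, we must traverse from $W_a$ to $W_b$. If $ab \in C$ is vulnerable, then since $ab \notin B$ the single connecting edge is not in $B'$ (by construction of $B$), so it is available. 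If $ab \in D$ is invulnerable, we use the non-trivial star $F_{ab}$: it has $4k > k$ leaf-edges into $z_b$, so at least one is unattacked, and the clique on $\{x^{(a)}_{2i-1}, x^{(a)}_{2i}\} \cup W'_{ab}$ is large enough to survive the attack, letting us route a path from $x^{(a)}_{2i-1}$ through $W'_{ab}$ to $z_b$.

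The main obstacle is the bookkeeping that guarantees $H'$ visits every vertex of $G'$ exactly once. The clique sizes are chosen precisely so that each $W_a$, each $W'_{ab}$, and each $z_b$ can be absorbed into exactly one traversal, using the $4k$ extra vertices in $Y_a$ and the fact that $G$ is $3$-regular (so each $W_a$ is entered and exited a bounded, controllable number of times). The delicate point is that a non-trivial star $F_{ab}$ must be used \emph{exactly once} when $ab \in H$ and \emph{not at all} otherwise, which is enforced because all its leaf-edges share the endpoint $z_b$; routing through $z_b$ more than once would revisit a vertex. Combining these local routings along the cyclic order of $H$ yields a single Hamiltonian cycle $H'$ in $G' - B'$, giving the desired contradiction.
\end{proof}
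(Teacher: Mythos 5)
Your overall strategy matches the paper's: you define $B$ as the set of vulnerable edges whose trivial connecting edge lies in $B'$, note $|B| \le |B'| \le k$, and then argue by contraposition/contradiction that a Hamiltonian cycle $H$ of $G - B$ can be lifted to a Hamiltonian cycle $H'$ of $G' - B'$. However, your lifting has a genuine gap: the only traversals you describe are of the cliques $W_a$ (one per vertex of $H$) and of the gadgets $W'_{ab}$ for edges $ab \in H$. The vertices of $W'_{ab}$ for invulnerable edges $ab \in D$ that are \emph{not} used by $H$ are never visited by your $H'$, so the walk you build is not a Hamiltonian cycle of $G'$. This is not a corner case: since $G$ is $3$-regular and $H$ uses exactly two edges at every vertex, there are $|V(G)|/2$ edges of $G$ unused by $H$, and each invulnerable one carries a set $W'_{ab}$ of $4k$ vertices. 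Your correct observation that the star $F_{ab}$ is used \emph{not at all} when $ab \notin H$ (because all its edges meet $z_b$) makes this worse, not better: those $4k$ vertices must then be collected without ever entering the gadget from the $z_b$ side. This is exactly why the gadget attaches $W'_{ab}$ to $W_a$ at \emph{two} vertices $x^{(a)}_{2i-1}, x^{(a)}_{2i}$: in the paper's proof, whenever $H'$ enters a set $W_a$ for the first time, it makes a detour into every out-gadget $W'_{ab}$ of $a$, entering through one $x$-vertex and returning through the other, thereby sweeping up these vertices; only the gadget of the edge on which $H'$ leaves $W_a$ is collected on the way out. Your proof contains no such mechanism, and the sentence that each $W'_{ab}$ \emph{can be absorbed into exactly one traversal} asserts the conclusion rather than supplying the missing routing.

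A second, more minor, weakness: to route around the at most $k$ attacked edges inside the large cliques, you argue that $k$ edge deletions \emph{cannot disconnect} a clique of size greater than $k$. Connectivity is far from sufficient for the Hamiltonian paths with prescribed endpoints that your construction needs. The claim you actually need (a clique on at least $2k+3$ vertices minus $k$ edges is Hamiltonian-connected) is true, e.g.\ via Ore-type minimum-degree conditions, but it must be invoked; the paper instead sidesteps it by a bookkeeping argument: it calls a vertex \emph{free} if none of its incident edges is attacked, observes that each $Y_a$ and each $W'_{ab}$ retains at least $2k$ free vertices, plans the tour so that attacked vertices are visited first and free vertices last, and then repairs each of the at most $k$ attacked edges on the planned tour by rerouting through a distinct free vertex. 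Either route can be made rigorous, but as written your argument for this step is also incomplete.
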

\begin{proof}
    Proof by contraposition. Assume that for all $B \subseteq C$ with $|B| \leq k$ the graph $G - B$ has a Hamiltonian cycle $H$.
    Given some $B' \subseteq E'$ with $|B'| \leq k$, we have to show that the graph $G' - B'$ has a Hamiltonian cycle. Let $B \subseteq C$ be the set of vulnerable edges in $G$ whose copies in $G'$ are attacked by $B'$ (i.e. $B = \set{ab \in C : F_{ab} \in B'}$). 
    Since $B \subseteq C$ and $|B| \leq k$, by assumption $G - B$ has a Hamiltonian cycle $H$. We want to modify $H$ to a Hamiltonian cycle of $H'$ of $G' - B'$. 
    The basic idea is to follow globally the same route as $H$. However, we have to pay attention, because we are not allowed to use edges from $B'$.
    For each vertex in $G'$ call it \emph{attacked}, if at least one of its incident edges are attacked by $B'$, and call it \emph{free} otherwise. 
    Note that since $|B'| \leq k$ and $|Y_a| = 4k$ and $|W'_{ab}| = 4k$ for $a \in V, ab \in E$, the vertex sets $Y_a$ and $W'_{ab}$ have at least $2k$ free vertices. Free vertices are good for the following reason: 
    Whenever we plan to go from some vertex $u$ to $v$ in $G'$, but we cannot because $uv \in B'$ was attacked, then we can instead choose any free vertex $f$ and go the route $u,f,v$ instead.
    Now the plan is that $H'$ will roughly employ the following strategy: Follow globally the same path in $G'$ like $H$ does in $G$. 
    Whenever $H'$ enters some new set $W_a$ for the first time, then we visit all the sets $W'_{ab}$ for all out-neighbors $b$ of $a$ in $\overrightarrow{G}$.
    Note that for such $b$, the set $W'_{ab}$ has two adjacent vertices with $W_a$ (we use these two vertices to enter and leave), and we collect all the vertices of $W'_{ab}$. 
    Here, we prioritize to visit first the attacked vertices of $W'_{ab}$ and then the remaining vertices of $W'_{ab}$. 
    After that, we collect all remaining vertices of $W_a$ (again prioritizing the attacked vertices first) before leaving $W_a$. (If the path on which we are leaving $W_a$ corresponds to an invulnerable edge $ab$ in $G$, we also collect all of $W'_{ab}$ in the process of leaving $W_a$.)

    Note that this plan might at first not be feasible, because it requires going over some edge $e' \in B'$. However note that, since $H$ does not use any edge of $B$, for every such edge $e'$ there are always at least $2k$ free vertices that are adjacent to both endpoints of $e'$.
    Hence it is possible to \enquote{repair} such an edge $e'$ by rerouting over some free vertex instead (and later skip over this free vertex). 
    Since there are at most $k$ defects, and there are at least $2k$ free vertices available at the end of traversing every set $W_a$ or $W'_{ab}$, all defects can be repaired. 
    Hence we can modify $H'$ to be a Hamiltonian cycle of $G' - B'$, which was to show.
\end{proof}

\begin{lemma}
\label{lem:hamcycle-only-if}
    If there exists $B \subseteq C$ of size $|B| \leq k$, such that $G - B$ has no Hamiltonian cycle, then there is $B' \subseteq E'$ of size $|B'| \leq k$ such that $G' - B'$ has no Hamiltonian cycle.
\end{lemma}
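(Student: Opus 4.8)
The plan is to prove the contrapositive, mirroring the structure of the proof of \cref{lem:hamcycle-if} but running the implication in the opposite direction. That is, I would assume that for \emph{every} $B' \subseteq E'$ with $|B'| \leq k$ the graph $G' - B'$ has a Hamiltonian cycle, and show that then for every $B \subseteq C$ with $|B| \leq k$ the graph $G - B$ also has a Hamiltonian cycle. Given such a $B \subseteq C$, the natural idea is to translate it into an attack $B' \subseteq E'$ on the new instance by attacking, for each vulnerable edge $ab \in B$, the corresponding single edge $F_{ab}$ (recall that for vulnerable edges the star $F_{ab}$ is a single edge $x^{(a)}_{2i-1}z_b$). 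Since the map $ab \mapsto F_{ab}$ is injective on vulnerable edges, we get $|B'| \leq |B| \leq k$. By assumption $G' - B'$ then has a Hamiltonian cycle $H'$, and the remaining task is to extract from $H'$ a Hamiltonian cycle $H$ of $G - B$.

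The heart of the argument is the structural analysis of how any Hamiltonian cycle $H'$ in $G'$ must behave with respect to the gadgets, and this is where I expect the main obstacle to lie. The key claims to establish are: (1.) $H'$ enters and leaves each clique $W_a$ exactly once, i.e.\ it traverses $W_a$ as a single contiguous sub-path; and (2.) $H'$ uses at most one edge of each star $F_{ab}$. Claim~(2.) follows because every edge of the star $F_{ab}$ is incident to the common vertex $z_b$, and a Hamiltonian cycle visits $z_b$ only once, hence can use at most two star-edges, and a careful degree/connectivity argument (using that the leaves $W'_{ab}$ are otherwise only attached to $W_a$) forces effectively a single crossing. Claim~(1.) is where 3-regularity of the original graph $G$ is essential: each $W_a$ has exactly $2d_a$ connector vertices $X_a$ (two per outgoing edge), and I would argue that because $z_a$ together with the star structure limits the number of available entry/exit points, a Hamiltonian cycle cannot traverse $W_a$ in more than one contiguous piece without stranding some vertex of $W_a$. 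I would make this precise by a counting argument on the number of times $H'$ can cross the boundary between $W_a$ and the rest of the graph.

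Once claims (1.) and (2.) are in place, I would define the candidate cycle $H$ in $G$ by the rule: the edge $ab$ belongs to $H$ if and only if $H'$ uses (the single edge, respectively some edge of) the star $F_{ab}$. The previous structural claims guarantee that this yields a well-defined set of original edges in which every vertex $a$ of $G$ has degree exactly two (each visit to $W_a$ enters and leaves through exactly two stars), so $H$ is a disjoint union of cycles; connectivity of $H'$ then forces $H$ to be a single Hamiltonian cycle of $G$. Finally, I would verify that $H$ avoids $B$: if some $ab \in B$ were used by $H$, then $H'$ would have to use the single attacked edge $F_{ab} \in B'$, contradicting that $H'$ is a Hamiltonian cycle of $G' - B'$. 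This completes the contrapositive and hence the lemma. The delicate point throughout will be ruling out that $H'$ "cheats" by traversing a gadget multiple times, which is exactly why the gadget sizes ($|W_a| = 2d_a + 4k + 1$, $|Y_a| = |W'_{ab}| = 4k$) and the 3-regularity assumption are needed.
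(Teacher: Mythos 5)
Your proposal follows the same global strategy as the paper's proof: contraposition, translating $B$ into the attack $B' = \set{F_{ab} : ab \in B}$ on the trivial stars, arguing that the resulting Hamiltonian cycle $H'$ of $G' - B'$ crosses each star at most once, concluding via 3-regularity and parity that each component of $G'$ minus all star edges is entered and exited exactly once, and finally reading off a Hamiltonian cycle $H$ of $G - B$ from the stars that $H'$ uses. The extraction step and the check that $H$ avoids $B$ are exactly as in the paper.

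However, there is a genuine gap at your claim (2), precisely where you yourself locate the main obstacle. You propose to prove by a degree/connectivity argument that every Hamiltonian cycle $H'$ uses at most one edge of each star $F_{ab}$. This statement is false in general, so no such structural argument can exist. A Hamiltonian cycle may use two edges of $F_{ab}$: it can enter the clique $W'_{ab}$ from $X_a$, traverse part of it, step to $z_b$ along one star edge, return into $W'_{ab}$ along another, and leave again through $X_a$; the remainder of $b$'s component ($X_b \cup Y_b$ together with the sets $W'_{bc}$ of outgoing edges of $b$) can then be collected in a separate stretch of the cycle that enters and exits through the stars of $b$'s outgoing edges (this is globally consistent whenever $b$ has out-degree at least $2$, since those stars are centered at vertices $z_c \neq z_b$). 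For such an $H'$, your extraction rule \enquote{$ab \in H$ iff $H'$ meets $F_{ab}$} gives vertex $b$ degree $3$ in $H$, and the proof collapses. What is true---and what the paper proves---is only a w.l.o.g.\ statement: if $|H' \cap F_{ab}| = 2$, then $H'$ can be \emph{rerouted} into another Hamiltonian cycle of $G' - B'$ with $H' \cap F_{ab} = \emptyset$, by replacing the detour $w_1, z_b, w_2$ (with $w_1, w_2 \in W'_{ab}$) by the clique edge $w_1 w_2$ and re-inserting $z_b$ between two consecutive cycle vertices of $W_b$; the spare vertices of the gadgets ($|Y_b| = |W'_{ab}| = 4k$ against a budget of $k$) guarantee that this rerouting never requires an attacked edge. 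This modification argument, not a counting argument, is the missing ingredient---and it, rather than your structural claim, is the actual reason for the gadget sizes. (A minor further imprecision: what is traversed contiguously is not the clique $W_a$ but the entire component of $G'$ minus all star edges containing $W_a$, i.e.\ $W_a$ together with the sets $W'_{ab}$ of its outgoing invulnerable edges; $W_a$ itself is in general visited in several pieces.)
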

\begin{proof}
    Proof by contraposition. Assume that for all $B' \subseteq E'$ of size $|B'| \leq k$ the graph $G' - B'$ has a Hamiltonian cycle. 
    Given some $B \subseteq C$ with $|B| \leq k$, we have to show that the graph $G - B$ has a Hamiltonian cycle. 
    Let $B'$ be the trivial stars in $G'$ corresponding to the edges in $B$ (i.e.\ $B' = \set{F_{ab} : ab \in B}$). 
    Since $|B'| \leq k$, by assumption there is a Hamiltonian cycle $H'$ in $G' - B'$.  
    Consider the set $F := \bigcup_{ab \in E}E(F_{ab})$, i.e.\ the union of the edge sets of all the stars, trivial or not.
    We claim that w.l.o.g.\ we can assume that $|H' \cap F_{ab}| \leq 1$ for all $ab \in E$.
    Indeed, the graph $G' - F$ consists out of multiple connected components. 
    Each of these components contains exactly one set of the form $W_a$, and is incident to exactly three sets of the form $F_e$ in $G'$ (where $e$ is an edge that is either incoming to or outgoing from $a$ in $\overrightarrow{G}$).
    Suppose for some $F_{ab}$ we have $|H' \cap F_{ab}| \geq 2$.
    Since $F_{ab}$ is a star connected to a single vertex $z_b$, we have $|H' \cap F_{ab}| = 2$.
    Consider the edge $ab$ such that $F_{ab}$ connects the vertex $z_b$ with $W'_{ab}$. 
    By the observation about $G' - F$, the following is true about $H'$: 
    It enters $W'_{ab}$ in one of the two vertices attached to $X_a$, then traverses exactly all of $W'_{ab} \cup \set{z_b}$, 
    then leaves through the other of the two vertices attached to $X_a$, and at a later point returns to collect all vertices of $X_b \setminus \set{z_b}$.
    However, by the same observation as in \cref{lem:hamcycle-if}, if we define a free vertex to be a vertex not adjacent to any edge in $B'$, then both $W'_{ab}$ and $W_b$ have $2k$ free vertices.
    Hence we can modify $H'$ such that $H' \cap F_{ab} = \emptyset$.
    We thus assume that $|H' \cap F_{ab}| \leq 1$ for all $ab \in E$.
    Consider again the graph $G' - F$.
    Since each of its component is adjacent to three sets $F_e$ and $|H' \cap F_e| \leq 1$, we conclude that $H'$ uses exactly two of these three sets $F_e$.
    But this implies that $H'$ enters and exits each of the components of $G'- F$ only once and collects all of its vertices in the process.
    This implies that $H'$ globally follows the same path as some Hamiltonian cycle $H$ of $G$.
    Since $H' \subseteq G' - B'$, we conclude $H \subseteq G - B$.
    This was to show.
\end{proof}

These two lemmas together prove \cref{thm:ham-cycle-interdiction}.
We would now like to prove $\Sigma^p_2$-completeness also for Hamiltonian cycle interdiction of directed graphs.
Note that this does not follow from a trivial argument: 
Even though one can transform an undirected graph into a directed one, by substituting every undirected edge $uv$ by two directed edges $(u,v), (v,u)$, there is a problem: In the new setting the interdictor needs two attacks to separate $u,v$, while in the old setting the attacker only needs one. 

Still, the above proof can be adapted to the case of directed Hamiltonian cycle the following way: 
We start with \cite{plesn1979np}, which provides a SSP reduction to prove that the Hamiltonian cycle problem is NP-complete even in directed graphs $G$ such that $\text{indegree}(v) + \text{outdegree}(v) \leq 3$ for every vertex $v$, and such that for all pairs $u,v$ of $G$ at most one of the two edges $(u,v)$ and $(v,u)$ is present. 
Given a directed graph $G$, we then repeat the same construction as before, 
with the difference that we can start directly with the directed graph $G$ instead of obtaining an orientation $\overrightarrow{G}$ first.
This way, we can obtain an undirected graph $G'$ in the same way as before. 
In a final step, we turn $G'$ into a directed graph by substituting every undirected edge $uv$ by a pair of two edges $(u,v), (v,u)$. We perform this substitution for every edge of $G'$ with the exception of the edges that are part of some star $F_{ab}$.
Instead, for each star $F_{ab}$, we orient the edges of $F_{ab}$ the same way as the original directed edge of $G$ between $a,b$.
It can be shown that all the arguments from the above construction still hold. 
Hence the minimum cardinality interdiction problem is $\Sigma^p_2$-complete also for directed graphs.

If one is interested in Hamiltonian paths instead of cycles, a similar modification is possible.
Inspecting the proof of \cite{DBLP:journals/siamcomp/GareyJT76} (of \cite{plesn1979np}, respectively) more closely, we find that in both constructions the graph $G$ contains some edge $e = uv$ (some edge $e = (u,v)$, respectively) such that every Hamiltonian cycle uses $e$. 
We can delete $e$ and identify the vertices $s,t$ with the endpoints of $e$.
Then a Hamiltonian $s$-$t$-path in the new graph corresponds to a Hamiltonian cycle in the old graph and vice versa.
Note that this does not increase the degree of the graph.
The rest of the proof proceeds in the same manner, both in the undirected and directed case.
Finally, the proof can also easily be adapted to the TSP by a standard reduction of undirected Hamiltonian cycle to the TSP 
(a graph $G$ is transformed into a TSP instance on the complete graph where the costs obey $c(uv) = 1$ if $ev \in E(G)$ and $c(uv) = n+1$ if $uv \not\in E(G)$).
In conclusion, we have proven that the minimum cardinality interdiction problem is $\Sigma^p_2$-complete for the directed/undirected Hamiltonian path/cycle problem and the TSP. 

% !TEX root = main.tex
\section{Cases where the meta-theorem does not apply}\label{sec:noMeta}

It would be nice to establish a meta-theorem providing $\Sigma^p_2$-completeness of the minimum cardinality interdiction version of all nominal problems, which are SSP-NP-complete, instead of only those problems that admit an additional function $g$ with properties as stated in \Cref{thm:meta-theorem}.
However, we show in this section that this is not possible.
More precisely, we provide a lemma that guarantees that the minimum cardinality version of a problem in SSP-NP is in coNP.
Therefore, under the usual complexity-theoretic assumption NP $\neq \Sigma^p_2$, the interdiction problem is not $\Sigma^p_2$-complete.

In order to provide an intuition under which circumstances a minimum cardinality interdiction problem resides in the class coNP, we examine the vertex cover problem.
In a vertex cover, every edge $uv$ needs to be covered by at least one of the two incident vertices $u$ and $v$.
This, however, gives the attacker the opportunity to attack both $u$ and $v$ such that the edge $uv$ can never be covered.
Therefore, an attacker budget of at least $2$ results in a clear Yes-instance.
On the other hand, if the attacker budget if at most $1$, we can provide a certificate for No-instances.
We can summarize this observation in the following lemma.

\begin{lemma}\label{lem:minCardInCoNP}
    Let $\Pi = (\I, \U, \sol)$ be an SSP problem.
    If in each instance $I \in \I$ there is a subset $U' \subseteq \U(I)$ of constant size, i.e. $|U'| = O(1)$, such that for $U' \cap S \neq \emptyset$ for all $S \in \sol(I)$, then \textsc{Min Cardinality Interdiction-$\Pi$} is contained in coNP.
\end{lemma}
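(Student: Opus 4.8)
The plan is to show membership in $\coNP$ by providing a polynomial-size certificate for No-instances of \textsc{Min Cardinality Interdiction-$\Pi$}. Recall that an instance is a pair $(I,k)$ and it is a Yes-instance iff there exists a blocker $B \subseteq \U(I)$ with $|B| \leq k$ that intersects every $S \in \sol(I)$. Let $U' \subseteq \U(I)$ be the guaranteed subset of constant size $|U'| = O(1)$ that hits every solution. The key observation is that $U'$ itself is already a blocker: since $U' \cap S \neq \emptyset$ for all $S \in \sol(I)$, the set $U'$ intersects every solution. Therefore, as soon as the budget satisfies $k \geq |U'|$, the instance is automatically a Yes-instance. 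This splits the analysis into two regimes depending on the relationship between $k$ and the constant $|U'|$.

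First I would handle the easy regime $k \geq |U'|$: here $(I,k)$ is always a Yes-instance, so a No-instance can never occur in this regime, and there is nothing to certify. Next I would treat the hard regime $k < |U'|$, i.e.\ $k \leq |U'| - 1 = O(1)$. In this case the budget $k$ is bounded by a constant, so there are only polynomially many candidate blockers: the number of subsets $B \subseteq \U(I)$ with $|B| \leq k$ is at most $\sum_{j=0}^{k} \binom{|\U(I)|}{j} = O(|\U(I)|^k) = |\U(I)|^{O(1)}$, which is polynomial in the input size since $k$ is constant. The instance $(I,k)$ is a No-instance iff \emph{every} such candidate $B$ fails to be a blocker, i.e.\ for every $B$ there exists some solution $S_B \in \sol(I)$ with $B \cap S_B = \emptyset$.

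This immediately yields the $\coNP$ certificate for a No-instance in the regime $k < |U'|$: the certificate lists, for each of the polynomially many candidate sets $B$ with $|B| \leq k$, a solution $S_B \in \sol(I)$ that is disjoint from $B$. Since $\Pi$ is in $\SSP$-$\NP$ (i.e.\ in $\NP$), membership $S_B \in \sol(I)$ can be verified in polynomial time, and checking $B \cap S_B = \emptyset$ is trivial. The verifier accepts the certificate iff it witnesses a disjoint solution for every candidate $B$, which holds iff no blocker of size at most $k$ exists, i.e.\ iff $(I,k)$ is a No-instance. The certificate has polynomial size because there are polynomially many candidates $B$ and each witness $S_B$ has polynomial encoding size. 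This establishes that \textsc{Min Cardinality Interdiction-$\Pi$} is in $\coNP$.

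The main obstacle — and the crux of why the constant bound on $|U'|$ is essential — is ensuring the enumeration of candidate blockers stays polynomial. This is exactly where the hypothesis $|U'| = O(1)$ does its work: it forces the only interesting budgets to be $k = O(1)$, so that $O(|\U(I)|^k)$ remains polynomial. If $|U'|$ were allowed to grow with the instance, the relevant budgets $k$ could grow as well, the enumeration $\binom{|\U(I)|}{k}$ would become superpolynomial, and the argument would collapse — consistent with the paper's broader point that for problems lacking such a small universal hitting set, the interdiction version genuinely sits at the $\Sigma^p_2$ level.
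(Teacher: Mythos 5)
Your proof is correct and follows essentially the same route as the paper's: both split into the regime $k \geq |U'|$ (where the instance is automatically a Yes-instance, so nothing needs certifying) and the regime $k < |U'| = O(1)$ (where the certificate for a No-instance enumerates the polynomially many candidate blockers of size at most $k$ and supplies, for each, a disjoint solution verifiable in polynomial time since the nominal problem is in NP). Your write-up is in fact slightly more careful than the paper's, which contains sign typos (writing $S \cap B' \neq \emptyset$ where disjointness is meant), but the underlying argument is identical.
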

\begin{proof}
    Let $k$ be the interdiction budget.
    If $|U'| \leq k$, then the interdictor is able to block the whole set $U'$.
    By definition of $U'$, there is no solution $S \in \sol(I)$ such that $U' \cap S \neq \emptyset$ and thus the interdictor has a winning strategy.
    If on the other hand $k < |U'| = O(1)$, then there is a polynomially sized certificate encoding a winning strategy of the defender, i.e. a certificate for a No-instance of the problem.
    For this, we first encode the ${|\U(I)| \choose k} = |\U(I)|^{O(1)}$ possible blockers $B' \subseteq \U(I)$ and then the solution $S \in \mathcal S(I)$ such that $S \cap B' \neq \emptyset$ for all $B' \subseteq \U(I)$.
    It is possible to efficiently verify the solution by checking whether there is a solution $S \in \mathcal S(I)$ such that $S \cap B' \neq \emptyset$ for all $B' \subseteq \U(I)$ holds because the nominal problem $\Pi$ is in NP.
    It follows that the problem lies in coNP.
\end{proof}

Consider the different variants of interdiction problems introduced in \cref{sec:different-variants-of-interdiction}.
Since they are more general, \cref{lem:minCardInCoNP} does not immediately imply that those variants are contained in coNP.
However, if for each instance the stronger condition $U' \cap F \neq \emptyset$ for all feasible solutions $F \in \F(I)$ and for some constant size set $U' \subseteq \U(I)$ holds, then the \emph{full decision variant of interdiction} and the \emph{most vital element problem} are contained in coNP.
Besides the containment in coNP, we can also derive the following corollary pinpointing the complexity of minimum cardinality interdiction problems whose nominal problem is in SSP-NP.

\begin{corollary}
    Let $\Pi = (\I, \U, \sol)$ be an SSP-NP-complete problem.
    If in each instance $I \in \I$ there is a subset $U' \subseteq \U(I)$ of constant size, i.e. $|U'| = O(1)$, such that for $U' \cap S \neq \emptyset$ for all $S \in \sol(I)$, then \textsc{Min Cardinality Interdiction-$\Pi$} is coNP-complete.
\end{corollary}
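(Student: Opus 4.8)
The plan is to prove coNP-completeness in the two standard steps, containment and hardness, and to observe that the new structural hypothesis is needed only for the former.

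For containment, I would simply invoke \cref{lem:minCardInCoNP}. The hypothesis of the corollary---that each instance $I$ admits a constant-size set $U' \subseteq \U(I)$ with $U' \cap S \neq \emptyset$ for all $S \in \sol(I)$---is verbatim the hypothesis of that lemma, so it immediately places \textsc{Min Cardinality Interdiction-}$\Pi$ in coNP. Nothing further is required in this direction.

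For hardness, the key observation I would use is that the interdiction problem degenerates, at budget $k = 0$, into the \emph{emptiness} question for $\sol$. Since $\Pi$ is SSP-NP-complete, the decision problem ``is $\sol(I) \neq \emptyset$?'' is NP-complete: it lies in NP by verifiability, and it is NP-hard because the SSP reduction $\textsc{Satisfiability} \leqSSP \Pi$ is in particular a many-one reduction on the yes/no level. Hence the complementary language $\overline{L} := \set{I \in \I : \sol(I) = \emptyset}$ is coNP-complete. I would then reduce $\overline{L}$ to \textsc{Min Cardinality Interdiction-}$\Pi$ via the polynomial-time map $I \mapsto (I, 0)$. Correctness is immediate: the only blocker of size at most $0$ is $B = \emptyset$, and $\emptyset \cap S = \emptyset$ for every $S$, so the interdiction requirement $\forall S \in \sol(I) : B \cap S \neq \emptyset$ is satisfied---vacuously---if and only if $\sol(I) = \emptyset$. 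Thus $(I,0)$ is a yes-instance of interdiction exactly when $I \in \overline{L}$, which establishes coNP-hardness and completes the proof.

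The point to appreciate here, rather than a genuine obstacle, is the asymmetric role of the hypothesis: the hardness reduction uses only that the nominal problem is NP-hard and never touches the constant-size set $U'$, whereas that very set is what confines the problem to coNP through \cref{lem:minCardInCoNP}. Without it, the same interdiction problem may instead be $\Sigma^p_2$-complete, as in \cref{thm:meta-theorem}. Combining the two steps yields coNP-completeness.
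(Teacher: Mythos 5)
Your proposal is correct and takes essentially the same approach as the paper: the paper also obtains containment from \cref{lem:minCardInCoNP} and proves hardness by a ``reduction by restriction,'' setting the interdiction budget $k=0$ so that the interdiction problem degenerates into the co-problem of the nominal problem $\Pi$. Your write-up simply makes explicit the details (the vacuous blocking condition at $k=0$ and the coNP-completeness of $\set{I : \sol(I) = \emptyset}$ via the SSP reduction from \textsc{Satisfiability}) that the paper's one-line proof leaves implicit.
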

\begin{proof}
    There is a reduction by restriction:
    Setting the interdiction budget $k = 0$ results in the corresponding co-problem co$\Pi$ of the nominal problem $\Pi$.
\end{proof}

\subsection{Applying the Lemma to Various Problems}

In this section, we apply \cref{lem:minCardInCoNP} to the problems mentioned earlier in this paper.
Some of the problems are affected in their original general form, e.g. vertex cover or satisfiability, while for others the lemma can be applied on a restricted version such as independent set on graphs with bounded minimum degree.
For this, we shortly describe the problem and then give the argument on how the lemma is applicable.

\textbf{Vertex Cover.}
An instance of the vertex cover interdiction problem consists of a graph $G$ and numbers $t,k \in \N_0$.
The question is if the attacker can find a set $B \subseteq V(G)$ with $|B| \leq k$ such that $B \cap S \neq \emptyset$ for every vertex cover $S$ of size at most $t$.
Now, observe that if $k \geq 2$ (and the graph is non-empty), the attacker can easily find such a set $B$ by selecting two adjacent vertices.
Thus, \cref{lem:minCardInCoNP} applies by defining $U' = \{u, v\}$ for some edge $uv \in E(G)$.
Observe that this not only destroys the solutions $S \in \sol(I)$ but also all feasible solutions $F \in \F(I)$.
Thus the minimum cardinality interdiction version, the full decision variant of interdiction and the most vital elements problem of vertex cover are coNP-complete.

\textbf{Satisfiability.}
An instance of the satisfiability interdiction problem consists out of a formula in CNF over the variables $X = \fromto{x_1}{x_n}$, with the literal set as universe, i.e. $\U = X \cup \overline X$, and interdiction budget $k$.
A similar issue as in vertex cover interdiction arises here:
If $k \geq 2$, the interdictor can just choose for some $i \in \fromto{1}{n}$ to attack both literals $x_i, \overline x_i$.
Every satisfying assignment (of non-trivial instances) contains either $x_i$ or $\overline{x_i}$, hence this is a successful attack.
Thus, \cref{lem:minCardInCoNP} applies by defining $U' = \{x, \overline x\}$ for some literal pair $x, \overline x \in \U$.
Again, this also destroys all feasible solutions $F \in \F(I)$.
Thus the minimum cardinality interdiction version, the full decision variant of interdiction and the most vital elements problem of satisfiability are coNP-complete.

\textbf{Independent Set on graphs with bounded minimum degree.}
An instance of the independent set interdiction problem consists of a graph $G =(V,E)$ with universe $U = V$, a threshold $t$ and an interdiction budget $k$.
The question of the independent set  problem is if there is a set $I \subseteq V$ such that all vertices in $I$ do not share an edge.
We now take the vertex $d$ of bounded degree into consideration.
If the attacker attacks the closed neighborhood $N[d]$ of $d$, all (optimal) solutions $S \in \mathcal S$ can be interdicted and thus \Cref{lem:minCardInCoNP} is applicable.
Thus minimum cardinality interdiction independent set on graphs with bounded minimum degree is coNP-complete.
In contrast to the other problems, this statement is not true for general feasible solutions $F \in \mathcal F(I)$. Hence we do not obtain a result for the variants from \Cref{sec:different-variants-of-interdiction}.

\textbf{Dominating Set on graphs with bounded minimum degree.}
An instance of the dominating set interdiction problem consists of a graph $G =(V,E)$ with universe $U = V$, a threshold $t$ and an interdiction budget $k$.
The question of the dominating set problem is if there is a set $D \subseteq V$ of size at most $t$ such that $D$ dominates all vertices of vertex set $V$. In other words, the union of the neighborhoods of the vertices in $D$ is the vertex set $V$, i.e. $\bigcup_{v \in D} N[v] = V$.
Again we consider a vertex $d$ of bounded degree.
Then, we can define the set of constant size to be $U' = N[d]$.
All feasible solutions $F \in \mathcal F$ have to include some vertex from $U'$ (otherwise $d$ would not be dominated).
Thus \Cref{lem:minCardInCoNP} is applicable to dominating set.
Accordingly, the minimum cardinality interdiction version, the full decision variant of interdiction and the most vital elements problem of dominating set on graphs with bounded minimum degree are coNP-complete.

\textbf{Hitting Set with bounded minimum set size.}
An instance of hitting set interdiction consists of a ground set $\{1, \ldots, n\}$ and $m$ sets $S_j \subseteq \{1,\ldots,n\}$ as well as a threshold $t$ and an interdiction budget $k$.
The universe is defined by $\U = \fromto{1}{n}$.
The question of the hitting set problem is whether there is a hitting set $H \subseteq \fromto{1}{n}$ of size at most $t$ for the sets $S_j$, that is, $H \cap S_j \neq \emptyset$ for $1 \leq j \leq m$.
We can apply \Cref{lem:minCardInCoNP} by defining $U'$ to be the set of constant size $|S_c| = O(1)$.
Then, the attacker is able to block the entire set $S_c$ such that it is not hittable, which interdicts all feasible solutions $F \in \mathcal F$.
Therefore the minimum cardinality interdiction version, the full decision variant of interdiction and the most vital elements problem of hitting set with bounded minimum set size are coNP-complete.

\textbf{Set Cover with bounded minimum coverage.}
An instance of the set cover interdiction problem consists of sets $S_i \subseteq \{1, \ldots, m\}$ for $1 \leq i \leq n$, a threshold $t$ and the an interdiction budget $k$.
The universe is defined as the sets $S_i$, $1 \leq i \leq n$.
The question of the set cover problem is whether there is selection $S \subseteq \{S_1, \ldots, S_n\}$ of size at most $k$ such that $\bigcup_{s \in S} s = \{1, \ldots, m\}$.
If there is an element $e \in \{1, \ldots, m\}$ of bounded coverage, i.e. there is a constant number of $S_i$, $1 \leq i \leq n$, with $e \in S_i$, then the attacker can attack all of these sets $S_i$.
Thus, we can apply \Cref{lem:minCardInCoNP} by choosing $U' = \{S_i \mid e \in S_i\}$ and all feasible solutions $F \in \mathcal F$ are blockable.
Accordingly, the minimum cardinality interdiction version, the full decision variant of interdiction and the most vital elements problem of set cover with bounded minimum coverage are coNP-complete.

\textbf{Steiner Tree on graphs with bounded minimum degree of terminal vertices.}
An instance of the Steiner tree interdiction problem consists of a graph $G= (S \cup T, E)$ of Steiner vertices $S$ and terminals $T$, edge weights $c: E \rightarrow \mathbb N$, a threshold $t$ and a interdiction budget $k$.
The universe is the edge set $\U = E$.
The question of the Steiner tree problem is if there is a tree $E' \subseteq E$ of weight $c(E') \leq t$ such that all terminal vertices $T$ are connected by $E'$.
If there is a terminal vertex $d \in T$ of bounded degree, then all incident edges build up a set $U' = \{dv \in E\}$ on which we can apply \Cref{lem:minCardInCoNP}.
This blocks all feasible solutions $F \in \mathcal F$.
Therefore, the minimum cardinality interdiction version, the full decision variant of interdiction and the most vital elements problem of Steiner tree on graphs with bounded minimum degree of terminal vertices are coNP-complete.

\textbf{Two Vertex-Disjoint Path on graphs with bounded degree.}
An instance of the two vertex-disjoint path interdiction problem consists of a directed graph $G=(V,A)$, vertices $s_1, s_2, t_1, t_2 \in V$ and interdiction budget $k$.
The universe is the arc set $\U = A$.
The question of the two vertex-disjoint path is if there are two paths $P_1, P_2 \subseteq A$ such that $P_i$ starts at $s_i$ and ends at $t_i$ and both paths $P_1$ and $P_2$ do not share a vertex.
If the the graph has bounded degree, we can choose any of the vertices that have to be included in on of the paths, e.g. $s_1$, and include all the incident arcs in $U' = \{(s_1, v) \in A\}$ such that we can apply \Cref{lem:minCardInCoNP}.
This blocks all feasible solutions $F \in \mathcal F$.
Accordingly, the minimum cardinality interdiction version, the full decision variant of interdiction and the most vital elements problem of two vertex-disjoint path on graphs with bounded degree are coNP-complete.

\textbf{Feedback Vertex Set on graphs with bounded girth.}
An instance of the feedback vertex set interdiction problem consists of a directed graph $G=(V,A)$, a threshold $t$ and interdiction budget $k$.
The universe is the vertex set $\U = V$.
The question of feedback vertex set is if there is a set $V' \subseteq V$ such that the graph is cycle free.
Accordingly, if the graph has bounded girth, there is a cycle of bounded length, which the attacker can attack or in other words, the cycle cannot be deleted by the defender by choosing a corresponding vertex to be in the feedback vertex set.
Thus all feasible solutions $F \in \mathcal F$ are blockable by applying \Cref{lem:minCardInCoNP} with $U' = \{v \in V \mid v \text{ is part of the smallest cycle in } G\}$.
Therefore, the minimum cardinality interdiction version, the full decision variant of interdiction and the most vital elements problem of feedback vertex set on graphs with bounded girth are coNP-complete.

\textbf{Feedback Arc Set on graphs with bounded girth.}
An instance of the feedback arc set interdiction problem consists of a directed graph $G=(V,A)$, a threshold $t$ and interdiction budget $k$.
The universe is the arc set $\U = A$.
The question of feedback arc set is if there is an arc set $A' \subseteq A$ such that the graph is acyclic.
We can use the same argument as in feedback vertex set.
That is, the attacker can choose the arcs of the smallest cycle in $G$.
Thus all feasible solutions $F \in \mathcal F$ are blockable by applying \Cref{lem:minCardInCoNP} with $U' = \{a \in A \mid a \text{ is part of the smallest cycle in } G\}$.
Therefore, the minimum cardinality interdiction version, the full decision variant of interdiction and the most vital elements problem of feedback arc set on graphs with bounded girth are coNP-complete.

\textbf{Uncapacitated Facility Location, p-Center, p-Median with bounded minimum customer coverage.}
An instance of the minimum cardinality interdiction version of these three problems consists of a set of potential facilities $F$ and a set of clients $C$ together with a cost function on the facilities $f: F \rightarrow \mathbb N$ and a service cost function $c: F \times C \rightarrow \mathbb N$ as well as a threshold $t$ and an interdiction budget $k$.
The universe is the facility set $\U = F$ and it is asked for a set of facilities $F' \subseteq F$ not exceeding the cost threshold $t$.
If the coverage of one customer is bounded, i.e. there is a bounded number of potential facilities that are able to serve the customer, the attacker is able to block all of these.
Thus we can define $U'$ as the set of facilities that are able to serve the customer of bounded coverage such that all feasible solutions $F \in \mathcal F$ can be interdicted.
Therefore, we can apply \Cref{lem:minCardInCoNP} and the minimum cardinality interdiction version, the full decision variant of interdiction and the most vital elements problem of these three facility locations problems with bounded minimum customer coverage are coNP-complete.

\textbf{Hamiltonian path/cycle (directed/undirected), TSP on graphs with bounded minimum degree.}
An instance of the minimum cardinality interdiction version of these problems consists of a graph $G=(V,E)$ (respectively $G=(V,A)$ in the directed case) and an interdiction budget $k$.
The universe is the set of edges $\U = E$ (respectively the set of arcs $\U = A$).
The question is whether there is a Hamiltonian path or cycle in $G$, i.e. a path/cycle that visits every vertex exactly once.
Because there is a vertex $d$ of bounded degree which has to be visited, we can define the set of constant size $U' = \{dv \in E\}$ (respectively $U' = \{(d,v),(v,d) \in A\}$).
If the set $U'$ is blocked it is not possible to visit the vertex, thus all feasible solutions $F \in \mathcal F$ can be interdicted.
Therefore, we can apply \Cref{lem:minCardInCoNP} and the minimum cardinality interdiction version, the full decision variant of interdiction and the most vital elements problem of these five Hamiltonian problems on graphs with bounded minimum degree are coNP-complete.

\subsection{Satisfiability with Universe over the Variables}

In the previous subsection we explained why minimum cardinality interdiction-\textsc{Sat} is contained in coNP, hence likely not $\Sigma^p_2$-complete.
Note that this is a consequence of our choice of definition of \textsc{Satisfiability}, where we explicitly defined the universe to be the literal set $L = X \cup \overline X$.
As a consequence, the interdictor may attack $X \cup \overline X$. 
\begin{samepage}
    \begin{mdframed}
    	\begin{description}
        \item[]\textsc{Satisfiability ($\U = L$)}\hfill\\
        \textbf{Instances:} Literal Set $L = \fromto{x_1}{x_n} \cup \fromto{\overline x_1}{\overline x_n}$, Clauses $C \subseteq \powerset{L}$\\
        \textbf{Universe:} $L =: \U$.\\
        \textbf{Solution set:} The set of all sets $L' \subseteq \U$ such that for all $i \in \fromto{1}{n}$ we have $|L' \cap \set{\ell_i, \overline \ell_i}| = 1$, and such that $|L' \cap c_j| \geq 1$ for all $c_j \in C$.
    	\end{description}
    \end{mdframed}
\end{samepage}

An interesting behavior occurs, when we consider the following alternative version \textsc{Satisfiability ($\U = X$)}. 
\begin{samepage}
    \begin{mdframed}
    	\begin{description}
        \item[]\textsc{Satisfiability  ($\U = X$)}\hfill\\
        \textbf{Instances:} Variable Set $X = \fromto{x_1}{x_n}$, Clauses $C \subseteq 2^{X \cup \overline X}$ \\
        \textbf{Universe:} $X =: \U$.\\
        \textbf{Solution set:} The set of all sets $X' \subseteq \U$ such that the assignment $\alpha: X \rightarrow \{0,1\}$ with $\alpha(x) = 1 \leftrightarrow x \in X'$ satisfies all clauses in $C$.
        \end{description}
    \end{mdframed}
\end{samepage}
Here the universe is only the variable set $X$, so in the interdiction version, the interdictor may only attack $X$, i.e.\ the interdictor may target individual variables and enforce that they must be set to \emph{false}. 
We show now that in contrast to the variant, where the universe is the literal set, in this new variant the interdiction problem is $\Sigma^p_2$-complete again. 
Since the problem \textsc{Satisfiability ($\U = X$)} is not part of the original problem set of \cite{gruene2024completeness}, we perform this proof in two steps.
\begin{lemma}
    \textsc{Satisfiability ($\U = X$)} is SSP-NP-complete, even when all clauses are restricted to length at most three.
\end{lemma}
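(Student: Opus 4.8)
The plan is to prove both containment and hardness. Containment in SSP-NP is immediate: given a candidate set $X' \subseteq \U$, the induced assignment $\alpha$ can be checked against every clause in polynomial time, so the problem is polynomially verifiable. For hardness, I would give an SSP reduction from \textsc{3Satisfiability} (with universe equal to the literal set), which is SSP-NP-complete within the framework of \cite{gruene2024completeness}.

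The central difficulty is that the target universe consists only of variables, whereas the source universe is a literal set $L = \fromto{x_1}{x_n} \cup \fromto{\overline x_1}{\overline x_n}$ containing both polarities; hence there is no obvious target element to which a negative literal $\overline x_i$ could be injectively mapped. I would resolve this by splitting each original variable into two fresh target variables: for each $i$ introduce $p_i$ (intended to encode the selection of the literal $x_i$) and $q_i$ (encoding the selection of $\overline x_i$). The instance map $g$ then (i)~keeps each original clause but rewrites every occurrence of $x_i$ as $p_i$ and every occurrence of $\overline x_i$ as $q_i$, and (ii)~adds the two \emph{consistency clauses} $(p_i \vee q_i)$ and $(\overline p_i \vee \overline q_i)$ for every $i$. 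The universe embedding is $f_I(x_i) := p_i$ and $f_I(\overline x_i) := q_i$, which is injective.

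Since $f_I$ is in fact a bijection onto the whole target universe $\set{p_1,q_1,\dots,p_n,q_n}$, the intersection $S' \cap f_I(\U(I))$ appearing in the SSP condition~(\ref{eq:SSP}) simplifies to $S'$, so it remains to verify $\set{f_I(S) : S \in \sol(I)} = \sol'(g(I))$. The consistency clauses enforce that any satisfying assignment of $g(I)$ sets exactly one of $p_i, q_i$ to true, which is precisely the condition that a source solution selects exactly one of $x_i, \overline x_i$. Under this correspondence a satisfying assignment of $g(I)$ translates back to a satisfying assignment of $I$ (a rewritten clause is satisfied iff the original clause is hit), and conversely. I would spell out both inclusions, applying $f_I^{-1}$ on the target side, which is well-defined because $S' \subseteq f_I(\U(I))$.

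Finally, the clause-length bound is automatic: rewriting preserves clause lengths, which are at most three in the \textsc{3Satisfiability} source, and the added consistency clauses have length two. Hence all clauses have length at most three, yielding the restricted statement. The only step requiring genuine care is the verification of~(\ref{eq:SSP}), in particular checking that the consistency clauses introduce no spurious solutions and exactly reconstruct the bijection between literal selections and variable assignments.
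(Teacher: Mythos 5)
Your proposal is correct and follows essentially the same route as the paper's own proof: the paper also reduces from the literal-universe satisfiability problem restricted to clauses of length three, splits each variable $x_i$ into two fresh positive variables (there called $x_i^t, x_i^f$, your $p_i, q_i$), adds the same exclusive-or consistency clauses $(x_i^t \lor x_i^f) \land (\overline{x}_i^t \lor \overline{x}_i^f)$, and uses the same bijective embedding $f(x_i) = x_i^t$, $f(\overline{x}_i) = x_i^f$ to verify condition~(\ref{eq:SSP}). The clause-length argument (substitution preserves length, consistency clauses have length two) is also identical.
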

\begin{proof}
    We provide an SSP reduction from the SSP-NP-complete problem \textsc{Satisfiability ($\U = L$)} to \textsc{Satisfiability ($\U = X$)}. 
    Consider an instance of \textsc{Satisfiability ($\U = L$)} given by a formula $\varphi$ with $n$ variables $X = \fromto{x_1}{x_n}$ and universe/literal set $\U = L = X \cup \overline{X}$.
    \textsc{Satisfiability ($\U = L$)} is SSP-NP-complete even when all clauses are restricted to length three, so let us w.l.o.g.\ assume that property.
    We have to show how to embed this universe into the universe $\U'$ of some corresponding \textsc{Satisfiability ($\U = X$)} instance $\varphi'$, where only positive literals are allowed in $\U'$.
    This can be done the following way:
    We introduce $2n$ new variables $X' := \fromto{x^t_1}{x^t_n} \cup \fromto{x^f_1}{x^f_n}$.
    The universe $\U' := X'$ consists out of the $2n$ corresponding positive literals $X'$.
    The new formula $\varphi'$ is defined from $\varphi$ in two steps.
    First a substitution process takes place: 
    For each $i=1,\dots,n$, the positive literal $x_i$ is replaced by the positive literal $x^t_i$ and each negative literal $\overline x_i$ is replaced by the positive literal $x_i^f$.
    In a second step, the clauses $(x^t_i \lor \overline x^f_i) \land (\overline x^t_i \lor x^f_i)$ (note that these are equivalent to $x_i^t \oplus x_i^f$) are added to $\varphi'$.
    Formally,
    \[
        \varphi' = \text{substitute}(\varphi) \land \bigwedge_{i=1}^n (x_i^t \lor x_i^f)\land (\overline{x}_i^t \lor \overline{x}^f_i).
    \]
    The SSP reduction is completed by specifying the embedding function $f : \U \to \U'$ via $f(x_i) := x_i^t$ and $f(\overline x_i) := x_i^f$.
    Clearly all clauses of $\varphi'$ have length at most three.
    Note that this reduction is a correct reduction, i.e.\ it transforms yes-instances into yes-instances and no-instances into no-instances, because the added constraints make sure that exactly one of $x_i^t$ and $x_i^f$ is true.
    Furthermore, it has the SSP property:
    For every solution $S \subseteq \U$ of \textsc{Satisfiability ($\U = L$)}, the \enquote{translated} set $f(S) \subseteq \U'$ is a solution of \textsc{Satisfiability ($\U = X$)}.
    Furthermore, for every solution $S \subseteq \U'$ of \textsc{Satisfiability ($\U = X$)}, the set $f^{-1}(S)\subseteq \U$ is a solution of \textsc{Satisfiability ($\U = L$)}.
    Accordingly, we have a correct SSP reduction (where the SSP mapping $f$ is even bijective due to $f(\U) = \U'$). 
\end{proof}

\begin{theorem}
    \textsc{Min Cardinality Interdiction-Satisfiability ($\U = X$)} is $\Sigma^p_2$-complete.
\end{theorem}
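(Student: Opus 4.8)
The plan is to treat containment and hardness separately. Since \textsc{Satisfiability ($\U = X$)} is a problem in $\mathit{NP}$, containment of \textsc{Min Cardinality Interdiction-Satisfiability ($\U = X$)} in $\Sigma^p_2$ follows immediately from \cref{lem:containment}. For hardness, having just established that \textsc{Satisfiability ($\U = X$)} is SSP-NP-complete, it suffices to verify the hypothesis of the meta-theorem (\cref{thm:meta-theorem}): we must exhibit a polynomial-time map $g$ which, given a formula $\varphi$ over variables $X$, a vulnerable set $C \subseteq X$, and a budget $k$, produces a formula $\varphi' = g(\varphi, C, k)$ over a new variable set $X'$ so that the combinatorial-interdiction biconditional of \cref{thm:meta-theorem} holds. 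The key interpretation is that here $\sol$ consists precisely of the satisfying assignments (viewed as the set of true variables), so attacking a variable means \emph{forcing it to false}, and a blocker $B$ hits all solutions if and only if forcing every variable of $B$ to false renders $\varphi$ unsatisfiable. Thus the invulnerable variables $D := X \setminus C$ must be made effectively unattackable by the construction.

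The conceptual crux, and the step I expect to be the main obstacle, is that the attacker's only move is to force variables to \emph{false}, which interacts awkwardly with negation. To keep an invulnerable variable $v \in D$ able to take the value \emph{true}, it suffices to provide $k+1$ redundant ``true-tokens'' so that the attacker (with budget $k$) can never disable them all; this is the familiar duplication idea used throughout \cref{sec:invulnerability-gadgets}. However, $v$ must also retain the ability to take the value \emph{false}, and this capability \emph{can} be sabotaged once we represent it through auxiliary variables. I would therefore use a \emph{two-sided} gadget: for each $v \in D$ introduce true-copies $v^t_0, \dots, v^t_k$ and false-copies $v^f_0, \dots, v^f_k$, replace each occurrence of the positive literal $v$ inside its clause by the disjunction $\bigl(v^t_0 \lor \cdots \lor v^t_k\bigr)$ and each occurrence of $\overline v$ by $\bigl(v^f_0 \lor \cdots \lor v^f_k\bigr)$, and leave every vulnerable variable of $C$ untouched. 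Crucially, I would also add the $(k+1)^2$ mutual-exclusion clauses $\bigl(\overline{v^t_i} \lor \overline{v^f_j}\bigr)$ for all $i,j \in \fromto{0}{k}$, which forbid activating both sides simultaneously; without them a defender could cheat by satisfying both $v$- and $\overline v$-clauses at once, making $\varphi'$ strictly easier than $\varphi$ and breaking the forward direction of the equivalence.

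With this gadget in place, the equivalence argument splits into the two implications of the meta-theorem's biconditional, both exploiting that each side has $k+1$ copies while the shared budget is only $k$. For the forward direction, I would attack in $\varphi'$ the same vulnerable variables used in $\varphi$; any satisfying assignment of the attacked $\varphi'$ reads off a consistent value of each $v \in D$ (consistency guaranteed by the mutual-exclusion clauses, so exactly one side is active), yielding a satisfying assignment of the attacked $\varphi$ and contradicting that the original attack was blocking. For the backward direction, given a blocker $B'$ of $\varphi'$ I would set $B := B' \cap C$; since at most $k$ of the $k+1$ copies on either side of any invulnerable variable are hit, an un-attacked true-token (resp.\ false-token) always survives, so from any satisfying assignment of $\varphi$ with $B$ forced false one reconstructs a satisfying assignment of $\varphi'$ with $B'$ forced false, again a contradiction. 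Hence the attacks on gadget variables are provably useless and $|B| \le |B'| \le k$, establishing the biconditional. The construction is clearly polynomial-time and outputs a legitimate \textsc{Satisfiability ($\U = X$)} instance (longer clauses are permitted), so \cref{thm:meta-theorem} applies and yields $\Sigma^p_2$-completeness.
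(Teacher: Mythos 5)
Your proof is correct, but your invulnerability gadget is genuinely different from the one in the paper. The paper exploits the asymmetry of the attack model: since the interdictor can only force variables to \emph{false}, the ``false side'' of an invulnerable variable needs no protection at all, so it uses a one-sided gadget realizing the equivalence $x_i \equiv x^{(1)}_i \lor \dots \lor x^{(k+1)}_i$, replacing positive occurrences by this disjunction and negative occurrences, via De Morgan, by the conjunction $\overline{x}^{(1)}_i \land \dots \land \overline{x}^{(k+1)}_i$. That substitution leaves CNF, and re-expanding by distributivity is polynomial only because the paper starts from clauses of length three; as the paper itself remarks, the resulting argument is therefore not literally an application of \cref{thm:meta-theorem} (the length-3 property is not preserved), but a reduction from \textsc{Comb. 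Interdiction} of the length-3 variant to \textsc{Min Cardinality Interdiction} of the unrestricted variant. Your two-sided gadget pays with $2(k+1)$ auxiliary variables and $(k+1)^2$ mutual-exclusion clauses per invulnerable variable, but in exchange negative literals are replaced by \emph{disjunctions} of positive token literals, so the formula stays in CNF, no distributive blow-up occurs, no clause-length restriction is needed, and the construction is a direct instantiation of \cref{thm:meta-theorem} for general \textsc{Satisfiability ($\U = X$)}. Your insistence on the mutual-exclusion clauses is also justified: without them the unsatisfiable formula $(v) \land (\overline{v})$ with $v$ invulnerable would map to a satisfiable formula, breaking the forward implication. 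One tiny imprecision: those clauses guarantee that \emph{at most} one side is active, not exactly one (the defender may set all tokens of both sides to false); your read-off rule ``$v$ is true iff some true-token is true'' handles that case, so the argument stands. Both routes prove exactly the same statement, namely $\Sigma^p_2$-completeness for unrestricted clause length.
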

\begin{proof}
    By the previous lemma, \textsc{Satisfiability ($\U = X$)} is SSP-NP-complete, even if all clauses are restricted to length three.
    Due to \cite{gruene2024completeness}, the problem \textsc{Comb. Interdiction-Satisfiability ($\U = X$)} is $\Sigma^p_2$-complete, even if all clauses are restricted to length three.
    We provide a reduction from the latter problem in terms of an invulnerability gadget analogous to the gadgets presented in \cref{sec:invulnerability-gadgets}. 
    For this, consider an instance of \textsc{Satisfiability ($\U = X$)} with formula $\varphi$ in CNF and every clause of length three, together with the universe $\U = \fromto{x_1}{x_n}$, a set $C \subseteq \U$ of vulnerable literals, and interdiction budget $k \in \N_0$.
    For every variable $x_i \in \U \setminus C$, we explain how to make $x_i$ invulnerable.
    We introduce $k+1$ new variables $x^{(1)}_i, \dots x^{(k+1)}_i$.
    Our goal is to establish the equivalence
    \[
        x_i \equiv x^{(1)}_i \lor \dots \lor x^{(k+1)}_i.
    \]
    We can achieve this through means of the following substitution process starting from formula $\varphi$: 
    Every occurrence of $x_i$ in the formula gets substituted by $x^{(1)}_i \lor \dots \lor x^{(k+1)}_i$. 
    Every occurrence of $\overline x_i$ gets substituted (by De Morgan's law) by $(\overline x^{(1)}_i \land \dots \land \overline x^{(k+1)}_i)$.
    Note that this has two effects: 
    First, the length of a clause may now exceed 3.
    Secondly, the formula is not in CNF anymore. 
    Note however that we can use the distributive law to expand every clause that is not in CNF. 
    Since before each clause before had a length of at most three, this results in a blow-up of the instance size of a factor at most $(k+1)^3$, i.e.\ at most a polynomial factor.
    Let $\varphi'$ be the resulting formula. 
    We can see that there is an equivalence of the satisfying assignments of $\varphi$ and $\varphi'$, in the sense that $x_i$ is true in $\varphi$ if and only if $x^{(1)}_i \lor \dots \lor x^{(k+1)}_i$ is true in $\varphi'$ (for all invulnerable $x_i$). 
    However, since the interdiction budget is only $k$, the interdictor can never enforce $x^{(1)}_i \lor \dots \lor x^{(k+1)}_i$ to be false for all invulnerable variables.
    This shows that \textsc{Comb. Interdiction-Satisfiability ($\U = X$)} reduces to \textsc{Min. Cardinality Interdiction-Satisfiability ($\U = X$)}, hence proving its $\Sigma^p_2$-completeness.
\end{proof}

Note that the reasoning presented in this proof was slightly different from \cref{thm:meta-theorem}, since we start with a formula where every clause has length three, but do not preserve this property during the proof.
Hence $\Sigma^p_2$-completeness is only shown in the case where clauses can have unrestricted length.

We can use an argument similar to \Cref{lem:minCardInCoNP} to show the coNP-completeness of the minimum cardinality interdiction version, the full decision variant of interdiction and the most vital elements problem of {\sc $b$-Satisfiability ($\U = X$)}, i.e. with clauses of length bounded by $b$.
Indeed, it is easy to see that the interdiction problem of \textsc{Satisfiability ($\U = X$)} where every clause has length three is coNP-complete:
If $k \geq 3$ holds for the interdiction budget, the attacker distinguishes two cases:
If there is a clause with three positive literals, the attacker blocks all of them and immediately wins. 
In the other case, every clause has at least one negative literal.
Then the attacker can never win, since the defender can set every variable to false, which is a satisfying assignment that can never be blocked.
By an analogous argument, we can see that for any $t = O(1)$, the interdiction problem of \textsc{Satisfiability ($\U = X$)} with clauses restricted to length $t$ is coNP-complete.

Finally, we remark that slightly different variants of interdiction-3-Sat have been shown to be $\Sigma^p_2$-complete. In these variants, the interdictor does not have access to all variables (see \cite[Sec. 4.2]{gruene2024completeness} or \cite[Thm. 1]{jackiewicz2024computational}).

\section{Conclusion}
We have shown that for a large class of NP-complete problems, the corresponding minimum cardinality interdiction problem is $\Sigma^2_p$-complete.
With that we have also shown the hardness of several different variants of interdiction that can be found in the literature including minimum blocker and most vital elements problems.
For this, we introduced a new type of reduction, namely invulnerability reductions.
This reduction uses the corresponding minimum cost interdiction problem as basis and ensures that non-blockable elements are effectively not attackable.
The hardness of the minimum cost interdiction problem is provable via an SSP reduction.
Additionally, we show that for some problems (e.g. vertex cover, satisfiability), the $\Sigma^p_2$-completeness cannot be derived despite the fact that the minimum cost interdiction problem is $\Sigma^p_2$-complete.
Overall, we show for 23 minimum cardinality interdiction problems that they are either $\Sigma^p_2$-complete or coNP-complete.
with the ability to apply the framework to further problems.

The following natural questions arise.
First, it is of interest to find more problems for which this framework is applicable.
Furthermore, it is relevant whether this framework is also extendable to problems that are in NP but not NP-complete.
One might lose the $\Sigma^2_p$-completeness for these problems, however, a meta-theorem that proves NP-completeness for such problems and generalizes the existing results in the literature is important to obtain a deeper understanding on the structure of such problems.
At last, the results of this paper are not always applicable (albeit sometimes) to the vertex deletion or in general element deletion interdiction problem.
Thus, it is of interest to show a similar meta-theorem for element deletion problems.

\newpage

\bibliography{bib_general,bib_interdiction,bib_reductions}

\newpage

\appendix
% !TEX root = main.tex
\section{Problems Definitions}
\label{app:sec:problemDefinitions}

\begin{samepage}
    \begin{mdframed}
    	\begin{description}
        \item[]\textsc{Satisfiability}\hfill\\
        \textbf{Instances:} Literal Set $L = \fromto{\ell_1}{\ell_n} \cup \fromto{\overline \ell_1}{\overline \ell_n}$, Clauses $C \subseteq \powerset{L}$.\\
        \textbf{Universe:} $L =: \U$.\\
        \textbf{Solution set:} The set of all sets $L' \subseteq \U$ such that for all $i \in \fromto{1}{n}$ we have $|L' \cap \set{\ell_i, \overline \ell_i}| = 1$, and such that $|L' \cap c_j| \geq 1$ for all $c_j \in C$, $j \in \fromto{1}{|C|}$.
    	\end{description}
    \end{mdframed}
\end{samepage}

\begin{samepage}
    \begin{mdframed}
    	\begin{description}
        \item[]\textsc{3-Satisfiability}\hfill\\
        \textbf{Instances:} Literal Set $L = \fromto{\ell_1}{\ell_n} \cup \fromto{\overline \ell_1}{\overline \ell_n}$, Clauses $C \subseteq \powerset{L}$ s.t. $\forall c_j \in C : |c_j| = 3$.\\
        \textbf{Universe:} $L =: \U$.\\
        \textbf{Solution set:} The set of all sets $L' \subseteq \U$ such that for all $i \in \fromto{1}{n}$ we have $|L' \cap \set{\ell_i, \overline \ell_i}| = 1$, and such that $|L' \cap c_j| \geq 1$ for all $c_j \in C$.
    	\end{description}
    \end{mdframed}
\end{samepage}

\begin{samepage}
    \begin{mdframed}
    	\begin{description}
        \item[]\textsc{Dominating Set}\hfill\\
        \textbf{Instances:} Graph $G = (V, E)$, number $k \in \N$.\\
        \textbf{Universe:} Vertex set $V =: \U$.\\
        \textbf{Feasible solution set:} The set of all dominating sets.\\
        \textbf{Solution set:} The set of all dominating sets of size at most $k$.
    	\end{description}
    \end{mdframed}
\end{samepage}

\begin{samepage}
    \begin{mdframed}
    	\begin{description}
        \item[]\textsc{Set Cover}\hfill\\
        \textbf{Instances:} Sets $S_i \subseteq \fromto{1}{m}$ for $i \in \fromto{1}{n}$, number $k \in \N$.\\
        \textbf{Universe:} $\{S_1 \dots, S_n\} =: \U$.\\
        \textbf{Feasible solution set:} The set of all $S \subseteq \{S_1, \dots, S_n\}$ s.t. $\bigcup_{s \in S} s = \fromto{1}{m}$.\\
        \textbf{Solution set:} Set of all feasible solutions with $|S| \leq k$.
    	\end{description}
    \end{mdframed}
\end{samepage}

\begin{samepage}
    \begin{mdframed}
    	\begin{description}   
        \item[]\textsc{Hitting Set}\hfill\\
        \textbf{Instances:} Sets $S_j \subseteq \fromto{1}{n}$ for $j \in \fromto{1}{m}$, number $k \in \N$.\\
        \textbf{Universe:} $\fromto{1}{n} =: \U$.\\
        \textbf{Feasible solution set:} The set of all $H \subseteq \fromto{1}{n}$ such that $H \cap S_j \neq \emptyset$ for all $j \in \fromto{1}{m}$.\\
        \textbf{Solution set:} Set of all feasible solutions with $|H| \leq k$.
    	\end{description}
    \end{mdframed}
\end{samepage}

\begin{samepage}
    \begin{mdframed}
    	\begin{description}
        \item[]\textsc{Feedback Vertex Set}\hfill\\
        \textbf{Instances:} Directed Graph $G = (V, A)$, number $k \in \N$.\\
        \textbf{Universe:} Vertex set $V =: \U$.\\
        \textbf{Feasible solution set:} The set of all vertex sets $V' \subseteq V$ such that after deleting $V'$ from $G$, the resulting graph is cycle-free (i.e. a forest).\\
        \textbf{Solution set:} The set of all feasible solutions $V'$ of size at most $k$.
    	\end{description}
    \end{mdframed}
\end{samepage}

\begin{samepage}
    \begin{mdframed}
    	\begin{description}
        \item[]\textsc{Feedback Arc Set}\hfill\\
        \textbf{Instances:} Directed Graph $G = (V, A)$, number $k \in \N$.\\
        \textbf{Universe:} Arc set $A =: \U$.\\
        \textbf{Feasible solution set:} The set of all arc sets $A' \subseteq A$ such that after deleting $A'$ from $G$, the resulting graph is cycle-free (i.e. a forest).\\
        \textbf{Solution set:} The set of all feasible solutions $A'$ of size at most $k$.
    	\end{description}
    \end{mdframed}
\end{samepage}

\begin{samepage}
    \begin{mdframed}
    	\begin{description} 
        \item[]\textsc{Uncapacitated Facility Location}\hfill\\
        \textbf{Instances:} Set of potential facilities $F = \fromto{1}{n}$, set of clients $C = \fromto{1}{m}$, fixed cost of opening facility function $f: F \rightarrow \Z$, service cost function $c: F \times C \rightarrow \Z$, cost threshold $k \in \Z$\\
        \textbf{Universe:} Facility set $F =: \U$.\\
        \textbf{Solution set:} The set of sets $F' \subseteq F$ s.t. $\sum_{i \in F'} f(i) + \sum_{j \in C} \min_{i \in F'} c(i, j) \leq k$.
    	\end{description}
    \end{mdframed}
\end{samepage}

\begin{samepage}
    \begin{mdframed}
    	\begin{description} 
        \item[]\textsc{p-Center}\hfill\\
        \textbf{Instances:} Set of potential facilities $F = \fromto{1}{n}$, set of clients $C = \fromto{1}{m}$, service cost function $c: F \times C \rightarrow \Z$, facility threshold $p \in \N$, cost threshold $k \in \Z$\\
        \textbf{Universe:} Facility set $F =: \U$.\\
        \textbf{Solution set:} The set of sets $F' \subseteq F$ s.t. $|F'| \leq p$ and $\max_{j \in C} \min_{i \in F'} c(i, j) \leq k$.
    	\end{description}
    \end{mdframed}
\end{samepage}

\begin{samepage}
    \begin{mdframed}
    	\begin{description} 
        \item[]\textsc{p-Median}\hfill\\
        \textbf{Instances:} Set of potential facilities $F = \fromto{1}{n}$, set of clients $C = \fromto{1}{m}$, service cost function $c: F \times C \rightarrow \Z$, facility threshold $p \in \N$, cost threshold $k \in \Z$\\
        \textbf{Universe:} Facility set $F =: \U$.\\
        \textbf{Solution set:} The set of sets $F' \subseteq F$ s.t. $|F'| \leq p$ and $\sum_{j \in C} \min_{i \in F'} c(i, j) \leq k$.
    	\end{description}
    \end{mdframed}
\end{samepage}

\begin{samepage}
    \begin{mdframed}
    	\begin{description} 
        \item[]\textsc{Independent Set}\hfill\\
        \textbf{Instances:} Graph $G = (V,E)$, number $k \in \N$.\\
        \textbf{Universe:} Vertex set $V =: \U$.\\
        \textbf{Feasible solution set:} The set of all independent sets.\\
        \textbf{Solution set:} The set of all independent sets of size at least $k$.
    	\end{description}
    \end{mdframed}
\end{samepage}

\begin{samepage}
    \begin{mdframed}
    	\begin{description}   
        \item[]\textsc{Clique}\hfill\\
        \textbf{Instances:} Graph $G = (V, E)$, number $k \in \N$.\\
        \textbf{Universe:} Vertex set $V =: \U$.\\
        \textbf{Feasible solution set:} The set of all cliques.\\
        \textbf{Solution set:} The set of all cliques of size at least $k$.
    	\end{description}
    \end{mdframed}
\end{samepage}

\begin{samepage}
    \begin{mdframed}
    	\begin{description}   
        \item[]\textsc{Subset Sum}\hfill\\
        \textbf{Instances:} Numbers $\fromto{a_1}{a_n} \subseteq \N$, and target value $M \in \N$.\\
        \textbf{Universe:} $\fromto{a_1}{a_n} =: \U$.\\
        \textbf{Solution set:} The set of all sets $S \subseteq \U$ with $\sum_{a_i \in S}a_i = M$.
    	\end{description}
    \end{mdframed}
\end{samepage}

\begin{samepage}
    \begin{mdframed}
    	\begin{description}   
        \item[]\textsc{Knapsack}\hfill\\
        \textbf{Instances:} Objects with prices and weights $\fromto{(p_1, w_1)}{(p_n, w_n)} \subseteq \N^2$, and $W, P \in \N$.\\
        \textbf{Universe:} $\fromto{(p_1, w_1)}{(p_n, w_n)} =: \U$.\\
        \textbf{Feasible solution set:} The set of all $S \subseteq \U$ with $\sum_{(p_i, w_i) \in S}w_i \leq W$.\\
        \textbf{Solution set:} The set of feasible $S$ with $\sum_{(p_i, w_i) \in S} p_i \geq P$.
    	\end{description}
    \end{mdframed}
\end{samepage}

\begin{samepage}
    \begin{mdframed}
    	\begin{description}
        \item[]\textsc{Directed Hamiltonian Path}\hfill\\
        \textbf{Instances:} Directed Graph $G = (V, A)$, Vertices $s, t \in V$.\\
        \textbf{Universe:} Arc set $A =: \U$.\\
        \textbf{Solution set:} The set of all sets $C \subseteq A$ forming a Hamiltonian path going from $s$ to $t$.
    	\end{description}
    \end{mdframed}
\end{samepage}

\begin{samepage}
    \begin{mdframed}
    	\begin{description}
        \item[]\textsc{Directed Hamiltonian Cycle}\hfill\\
        \textbf{Instances:} Directed Graph $G = (V, A)$.\\
        \textbf{Universe:} Arc set $A =: \U$.\\
        \textbf{Solution set:} The set of all sets $C \subseteq A$ forming a Hamiltonian cycle.
    	\end{description}
    \end{mdframed}
\end{samepage}

\begin{samepage}
    \begin{mdframed}
    	\begin{description}  
        \item[]\textsc{Undirected Hamiltonian Cycle}\hfill\\
        \textbf{Instances:} Graph $G = (V, E)$.\\
        \textbf{Universe:} Edge set $E =: \U$.\\
        \textbf{Solution set:} The set of all sets $C \subseteq E$ forming a Hamiltonian cycle.
    	\end{description}
    \end{mdframed}
\end{samepage}

\begin{samepage}
    \begin{mdframed}
    	\begin{description}
        \item[]\textsc{Traveling Salesman Problem}\hfill\\
        \textbf{Instances:} Complete Graph $G = (V, E)$, weight function $w: E \rightarrow \Z $, number $k \in \N$.\\
        \textbf{Universe:} Edge set $E =: \U$.\\
        \textbf{Feasible solution set:} The set of all TSP tours $T\subseteq E$.\\
        \textbf{Solution set:} The set of feasible $T$ with $w(T) \leq k$.
    	\end{description}
    \end{mdframed}
\end{samepage}

\begin{samepage}
    \begin{mdframed}
    	\begin{description}
        \item[]\textsc{Directed} $k$-\textsc{Vertex Disjoint Path}\hfill\\
        \textbf{Instances:} Directed graph $G = (V, A)$, $s_i, t_i \in V$ for $i \in \fromto{1}{k}$.\\
        \textbf{Universe:} Arc set $A =: \U$.\\
        \textbf{Solution set:} The sets of all sets $A' \subseteq A$ such that $A' = \bigcup^k_{i = 1} A(P_i)$, where all $P_i$ are pairwise vertex-disjoint paths from $s_i$ to $t_i$ for $1 \leq i \leq k$.
    	\end{description}
    \end{mdframed}
\end{samepage}

\begin{samepage}
    \begin{mdframed}
    	\begin{description}
        \item[]\textsc{Steiner Tree}\hfill\\
        \textbf{Instances:} Undirected graph $G = (S \cup T, E)$, set of Steiner vertices $S$, set of terminal vertices $T$, edge weights $c: E \rightarrow \N$, number $k \in \N$.\\
        \textbf{Universe:} Edge set $E =: \U$.\\
        \textbf{Feasible solution set:} The set of all sets $E' \subseteq E$ such that $E'$ is a tree connecting all terminal vertices from $T$.\\
        \textbf{Solution set:} The set of feasible solutions $E'$ with $\sum_{e' \in E'} c(e') \leq k$.
    	\end{description}
    \end{mdframed}
\end{samepage}

\end{document}